\documentclass[journal]{IEEEtran}
\usepackage[cmex10]{amsmath}
\usepackage{amssymb,amsfonts,amsthm}
\usepackage{booktabs}
\usepackage{graphicx}
\usepackage{cite}
\usepackage[colorlinks=false,pdfborder={0 0 0}]{hyperref}
\newcommand{\customfootnotetext}[2]{{
  \renewcommand{\thefootnote}{#1}
  \footnotetext[0]{#2}}}
\newtheorem{theorem}{Theorem}
\newtheorem{lemma}[theorem]{Lemma}
\newtheorem{proposition}[theorem]{Proposition}
\newtheorem{corollary}[theorem]{Corollary}
\newtheorem{definition}[theorem]{Definition}
\newtheorem{remark}[theorem]{Remark}
\newtheorem{fact}[theorem]{Fact}
\newcommand{\ket}[1]{\lvert #1\rangle}
\newcommand{\bra}[1]{\langle #1\rvert}
\newcommand{\Tr}{\operatorname{Tr}}
\DeclareMathOperator{\Perm}{Perm}
\newcommand{\bbE}{\mathbb E}
\newcommand{\QED}{\ensuremath{\square}}
\begin{document}
\title{Exact Posterior Prediction from Product Haar Measurements and a Randomized-Mesh Maximum-Likelihood Bridge}
\author{ Naqueeb Ahmad Warsi\textsuperscript{$*$}, Ayanava Dasgupta\textsuperscript{$*$}
and Masahito Hayashi\textsuperscript{$\dagger$} 
}
\date{}              

\maketitle

\customfootnotetext{$*$}{
Indian Statistical Institute,
Kolkata 700108, India.
(e-mail: [naqueebwarsi, ayanavadasgupta\_r]@isical.ac.in
)}
\customfootnotetext{$\dagger$}{
School of Data Science, The Chinese University of Hong Kong, Shenzhen, Longgang District, Shenzhen, 518172, China,
International Quantum Academy (IQA), Futian District, Shenzhen 518048, China,
and
Graduate School of Mathematics, Nagoya University, Chikusa-ku, Nagoya 464-8602, Japan.
(e-mail: hmasahito@cuhk.edu.cn).
}

\begin{abstract}
We study prediction of one unmeasured copy of an unknown finite-dimensional pure quantum state after independently measuring the observed copies with the one-copy Haar POVM. For the resulting fixed separable observation, the Bayes predictive state under quantum relative-entropy loss is the full-rank posterior mean. We evaluate it exactly through permanents of minors of the outcome Gram matrix, prove positivity and normalization of the permanent representation, and establish minimaxity among decision rules based on these fixed outcomes. Using Hilbert--Schmidt projection of the posterior mean and an exactly analyzed linear-inversion competitor, we prove a finite-sample posterior-purity bound and an explicit relative-entropy guarantee with sample count $O((d^2/\epsilon)\log(d/\epsilon))$. Using the exact collective benchmarks proved in a companion paper, we also obtain entropy and posterior-purity comparisons. To connect the fine product data to a regular ray-valued experiment without assuming deterministic finite-sample uniqueness, we draw one random orientation for a finite projective mesh, retain it for the entire sample, and apply a selected finite-alphabet MLE in the resulting reference experiment. A nested recovering mesh uniformly recovers the fine Fisher information. Consequently every sufficiently fine fixed mesh gives a regular finite-alphabet model. Its selected MLE has Haar-averaged inverse-Fisher coefficient $\overline a_k$, and refining the mesh after the fixed-mesh large-sample limit makes $\overline a_k$ approach $d-1$. This yields epsilon-optimal leading relative-entropy and posterior-overlap bounds for the fine posterior and the sharper fixed-dimensional asymptotic scale $(d/\epsilon)\log(d/\epsilon)$. The latter is not asserted as a uniform finite-sample guarantee. No deterministic finite-sample MLE uniqueness or triangular mesh schedule is used.
\end{abstract}

\section{Introduction}
In quantum inference, measurements on several identically prepared systems may
be used not only to estimate their common state, but also to assign a state to
another copy that remains unmeasured
\cite{SchackBrunCaves2001QuantumBayes,TanakaKomaki2005BayesianPredictive}.
The latter is a prediction problem rather than merely a point-estimation
problem.  Its decision is a density operator conditioned on the observed data,
not necessarily a pure-state estimate, and it can therefore represent both the
information extracted from the measured copies and the uncertainty that
remains.  We study this problem for an unknown finite-dimensional pure state
with Haar prior.  The first $n$ copies are measured independently by the
one-copy Haar POVM, one additional copy is left unmeasured, and the announced
state for that future copy is evaluated by quantum relative-entropy loss.

The general Bayesian foundation is known.  In classical predictive-density
estimation, the posterior predictive density is the Bayes act under
Kullback--Leibler loss; see, for example,
\cite{Aitchison1975PredictionFit,AitchisonDunsmore1975Prediction,Geisser1993Predictive,Komaki2006ShrinkagePrediction,GeorgeLiangXu2006PredictiveKL}.
In the quantum setting, Tanaka and Komaki formulated Bayesian predictive
density operators and established their optimality under averaged quantum
relative entropy, while the quantum Bayes rule explains how posterior states
arise after measurements
\cite{TanakaKomaki2005BayesianPredictive,SchackBrunCaves2001QuantumBayes}.
Accordingly, we do not claim as new the fixed-observation principle that the
posterior mean is the Bayes predictive state.  Our problem is to evaluate that
state exactly for the product Haar observation and to determine its
finite-sample and asymptotic performance.

A second line of prior work concerns pure-state estimation under fidelity-based
criteria.  Optimal extraction of a pure-state estimate from a finite ensemble
and realizable optimal covariant measurements were developed by Massar and
Popescu and by Derka, Bu\v{z}ek, and Ekert
\cite{MassarPopescu1995OptimalEstimation,DerkaBuzekEkert1998OptimalEstimation}.
Hayashi subsequently developed an asymptotic estimation theory for
finite-dimensional pure-state models and identified the relevant
information-geometric structure
\cite{Hayashi1998PureStateEstimation}.  These estimation results concern a
reported ray and its fidelity or local quadratic loss.  Our decision object is
instead a generally mixed posterior predictive state, and our primary loss is
quantum relative entropy.  Nevertheless, pure-state estimation enters our
argument through a covariant ray-valued coarse observation: its mean
infidelity determines the spectrum of the coarse posterior and thereby bounds
the entropy and purity of the posterior conditioned on the complete product
Haar data.

The relation to \cite{Hayashi1998PureStateEstimation} is particularly direct at
the level of the leading coefficient.  Under our normalization, the classical
Fisher information of the one-copy continuous Haar observation is
$2g_{\rm FS}$, whereas pure-state infidelity has local quadratic form
$g_{\rm FS}$.  Thus an efficient ray estimator has information-geometric
coefficient
\begin{equation}
 \operatorname{Tr}\!\left[g_{\rm FS}(2g_{\rm FS})^{-1}\right]
 =\frac12\dim_{\mathbb R}\mathbb{CP}^{d-1}=d-1.
 \label{eq:intro_hayashi_coefficient}
\end{equation}
Our purpose is not merely to repeat this estimation coefficient.  We construct
a regular finite-alphabet approximation that transfers it to expected
infidelity without applying ordinary smooth MLE theory directly to the
continuous density, which vanishes on a parameter-dependent projective
hyperplane.  Covariant posterior coarse-graining then transfers the same
coefficient to posterior entropy and posterior purity for the complete product
Haar data.

The unrestricted collective prediction problem is treated separately in a
companion paper \cite{CompanionGlobal}.  That work proves the exact global
relative-entropy endpoint
\[
 h_d\!\left(\frac{n+1}{n+d}\right)
\]
and the exact global posterior-overlap endpoint
\[
 \frac{(n+1)^2+d-1}{(n+d)^2}.
\]
We use these values only as certified benchmarks.  We do not repeat the
arbitrary-effect posterior identity, pointwise majorization, global Bayes
proof, global minimax proof, or unrestricted posterior-overlap optimization.
The distinction between the two problems is essential: the companion paper
optimizes over collective measurements, whereas the present paper fixes the
independent one-copy Haar measurement and optimizes and analyzes the decision
rule based on its outcomes.

Our first contribution is the exact posterior predictive state for the product
Haar data.  Symmetric projection converts the posterior integral into
Gram-matrix combinatorics: the outcome-dependent one-particle marginal is
expressed through permanents of Gram minors, and the final posterior state adds
a universal identity term.  The result is a full-rank state with spectral floor
$I/(n+d)$, not a naive average of the observed projectors.  We prove positivity
and normalization of the permanent representation and show that the posterior
state is Bayes optimal and minimax among decision rules based on this fixed
product observation.  This local minimax statement does not optimize over
collective measurements.

Our second contribution is a uniform finite-sample performance guarantee.
Writing $\mathcal P_{n,d}^{\rm loc}$ for the mean posterior purity and
$R_{n,d}^{\rm loc}$ for the relative-entropy risk, we prove
\begin{equation}
 1-\mathcal P_{n,d}^{\rm loc}
 \le \frac{d^2+d-2}{n}
 \label{eq:intro_finite_purity_bound}
\end{equation}
and
\begin{equation}
 R_{n,d}^{\rm loc}
 \le \frac{d^2+d-2}{n}\frac{n+d}{n+d-1}\log(n+d).
 \label{eq:intro_finite_entropy_bound}
\end{equation}
Consequently, $R_{n,d}^{\rm loc}\le\epsilon$ is guaranteed by the explicit
sufficient condition
\begin{equation}
 n\ge \frac{4(d^2+d-2)}{\epsilon}
 \log\!\left(\frac{4(d^2+d-2)(d+1)}{\epsilon}\right),
 \label{eq:intro_sample_complexity}
\end{equation}
which gives the uniform finite-sample scaling
$O((d^2/\epsilon)\log(d/\epsilon))$.  This result follows from the
Hilbert--Schmidt projection property of the posterior mean and the exactly
computed variance of a linear-inversion competitor; it is independent of the
randomized-mesh asymptotics.

Our third contribution is an asymptotically sharp evaluation at every fixed
dimension.  For a fixed finite projective mesh, one Haar-random orientation is
drawn and retained for the entire sample.  Conditional on that orientation,
the cell labels form a regular finite-alphabet curved exponential family.  A
fixed measurable MLE selector is asymptotically efficient in this reference
model, while multiplication by the shared orientation makes the reported-ray
experiment exactly covariant.  The fixed-mesh expected-infidelity coefficient
is $\overline a_k$, and mesh refinement recovers the Hayashi coefficient through
$\overline a_k\to d-1$.  Combining the resulting coarse-posterior upper bounds
with the exact collective benchmarks gives
\begin{align}
 R_{n,d}^{\rm loc}
 &=(d-1)\frac{\log n}{n}
   +o\!\left(\frac{\log n}{n}\right),
 \label{eq:intro_sharp_entropy_asymptotic}\\
 \mathcal P_{n,d}^{\rm loc}
 &=1-\frac{2(d-1)}{n}+o(n^{-1}).
 \label{eq:intro_sharp_purity_asymptotic}
\end{align}
Thus the local posterior has the same leading entropy and purity coefficients
as the exact collective benchmarks, although the measurements and operational
optimizations are different.

The two performance conclusions have different scopes.  The sample-count bound
\eqref{eq:intro_sample_complexity} is a uniform finite-sample guarantee with
quadratic dimensional dependence.  The sharper formulas
\eqref{eq:intro_sharp_entropy_asymptotic}--\eqref{eq:intro_sharp_purity_asymptotic}
are fixed-dimensional asymptotic statements.  Their order of limits is
essential: one first fixes the mesh and lets $n\to\infty$, and only afterward
refines the mesh.  We make no claim for a triangular schedule $k=k_n$ or for a
uniform nonasymptotic linear-in-dimension theorem.

The paper is organized as follows. Section~\ref{sec:bayes_tools_p2} gives the
fixed-observation posterior and coarse-graining tools.
Section~\ref{sec:local_haar_povm_prediction} derives the exact posterior, its
permanent representation, and the decision-rule minimax statement.
Section~\ref{sec:finite_sample_risk_complexity} proves the finite-sample risk and
sample-complexity bounds. Sections~\ref{sec:rmesh_main}--\ref{sec:fixed_mesh_mle_efficiency}
introduce recovering meshes and derive fixed-mesh MLE efficiency from their
regularity. Section~\ref{sec:posterior_comparison_asymptotics} recovers the fine
coefficient, derives the covariant posterior comparisons and sharp
iterated-limit asymptotics, and concludes with a finite-sample numerical
illustration.
Appendix~\ref{app:informationally_complete_meshes} constructs recovering meshes and verifies uniform Fisher recovery.  The remaining appendices give the permanent calculations.
\section{Posterior Prediction for a Fixed Observation}
\label{sec:bayes_tools_p2}
Let $M^{(n)}(dx)$ be a fixed POVM on the observed copies, let $\pi_M(d\phi\mid x)$ be the posterior under the Haar prior, and define
\[
 \mu_M(x)=\int \rho_\phi\,\pi_M(d\phi\mid x).
\]
For output-marginal almost every $x$ and every density matrix $\sigma$ whose support contains that of $\mu_M(x)$,
\[
 \int D(\rho_\phi\Vert\sigma)\,\pi_M(d\phi\mid x)
 =S(\mu_M(x))+D(\mu_M(x)\Vert\sigma).
\]
Thus $\mu_M(x)$ is the Bayes act and the optimized conditional loss is $S(\mu_M(x))$. This fixed-observation statement does not optimize the measurement.

Suppose a fine observation $X$ is randomized to a coarse observation $Y$ by a Markov kernel, and define regular conditional posterior states
\[
 \rho_X=\bbE[\rho_\Phi\mid X],\qquad
 \rho_Y=\bbE[\rho_\Phi\mid Y]=\bbE[\rho_X\mid Y].
\]
Then
\begin{align}
 \bbE S(\rho_Y)-\bbE S(\rho_X)
 &=\bbE D(\rho_X\Vert\rho_Y),\label{eq:p2_entropy_gap}\\
 \bbE\Tr(\rho_X^2)-\bbE\Tr(\rho_Y^2)
 &=\bbE\Tr[(\rho_X-\rho_Y)^2].\label{eq:p2_purity_gap}
\end{align}
Hence coarse-graining increases posterior entropy and decreases posterior purity.

Define
\[
 h_d(r)=-r\log r-(1-r)\log\frac{1-r}{d-1},\quad
 p_d(r)=r^2+\frac{(1-r)^2}{d-1}.
\]
For an exactly covariant ray-valued coarse observation $\widehat\psi$, stabilizer invariance gives, output-marginal almost everywhere,
\[
 \rho(\widehat\psi)=r\rho_{\widehat\psi}+\frac{1-r}{d-1}(I-\rho_{\widehat\psi}).
\]
Its entropy is $h_d(r)$ and its purity is $p_d(r)$.

\section{Posterior Predictive State for the Local Haar-POVM Model}
\label{sec:local_haar_povm_prediction}

This section starts from the outcomes produced by independent applications
of the one-copy Haar POVM and constructs the corresponding posterior
predictive-state rule.  The product measurement is specified before the
decision rule is optimized; no optimization over measurements is performed
in this section.

\subsection{The one-copy and product Haar-POVM setting}
\label{subsec:local_product_haar_povm_model}

Throughout this section, \(du\) denotes the normalized unitarily invariant
probability measure on the complex projective space of pure-state rays.  We
identify a ray \(u\) with the rank-one projector \(\ket{u}\bra{u}\).  With this
normalization,
\begin{equation}
    \int \ket{u}\bra{u}\,du
    =
    \frac{I}{d}.
    \label{eq:haar_projector_first_moment_main}
\end{equation}
The one-copy Haar POVM is
\begin{equation}
    E(du)
    =
    d\,\ket{u}\bra{u}\,du,
    \label{eq:haar_povm_normalization_main}
\end{equation}
so that \(\int E(du)=I\).  For background on Haar-measure tools and
group-theoretic methods in quantum information, see
\cite{Mele2024,Group2,Wilde_2013,H2017QIT}.

The local observation model used in this section is the product POVM obtained by
applying \eqref{eq:haar_povm_normalization_main} independently to the \(n\)
input copies:
\begin{equation}
    E^{(n)}(d\vec u)
    :=
    E(du_1)\otimes\cdots\otimes E(du_n),
    ~
    \vec u=(u_1,\ldots,u_n).
    \label{eq:product_haar_povm_def}
\end{equation}
For each outcome sequence, set
\begin{equation}
    \Pi_i:=\ket{u_i}\bra{u_i},
    \quad
    O_n(\vec u):=\Pi_1\otimes\cdots\otimes\Pi_n.
    \label{eq:Pi_i_On_def}
\end{equation}
Then
\begin{equation}
    E^{(n)}(d\vec u)
    =
    d^n O_n(\vec u)\,du_1\cdots du_n.
    \label{eq:product_haar_density_kernel}
\end{equation}
Thus \(O_n(\vec u)\) is the rank-one tensor kernel of the product Haar-POVM
density, with the scalar factor \(d^n\) separated out.

For a hypothesis pure state \(\rho_\phi=\ket\phi\bra\phi\), the likelihood
density with respect to the product Haar measure is
\begin{equation}
\begin{aligned}
    p_\phi^{(n)}(\vec u)
    &:=
    \frac{
        \operatorname{Tr}
        [\rho_\phi^{\otimes n}E^{(n)}(d\vec u)]
    }{du_1\cdots du_n} \\
    &=
    d^n\operatorname{Tr}
    [\rho_\phi^{\otimes n}O_n(\vec u)]                        
    =
    d^n\prod_{i=1}^n |\langle u_i\mid\phi\rangle|^2.
\end{aligned}
    \label{eq:local_product_haar_likelihood_density}
\end{equation}

\begin{lemma}[Covariance of the product Haar-POVM]
\label{lem:product_haar_povm_covariance}
For every \(W\in U(d)\),
\begin{equation}
    E^{(n)}(d(W\vec u))
    =
    W^{\otimes n}E^{(n)}(d\vec u)W^{\otimes n\dagger},
    \label{eq:product_haar_povm_covariance}
\end{equation}
where \(W\vec u=(Wu_1,\ldots,Wu_n)\).  Equivalently,
\begin{equation}
    p_{W\phi}^{(n)}(W\vec u)
    =
    p_\phi^{(n)}(\vec u)
    \label{eq:product_haar_likelihood_covariance}
\end{equation}
with respect to the product Haar measure.
\end{lemma}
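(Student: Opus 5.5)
The plan is to reduce both claims to two elementary facts: the Haar measure $du$ on rays is invariant under $u\mapsto Wu$, and projectors transform as $\ket{Wu}\bra{Wu}=W\ket{u}\bra{u}W^\dagger$. Nothing beyond the definitions \eqref{eq:haar_povm_normalization_main}--\eqref{eq:local_product_haar_likelihood_density} is needed.

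First, for the one-copy POVM, interpret $E(d(Wu))$ as the pushforward statement: for any Borel set $A$, $E(WA)=W E(A) W^\dagger$. This follows by substituting $u=Wv$ in
\[
E(WA)=d\int_{WA}\ket{u}\bra{u}\,du=d\int_A \ket{Wv}\bra{Wv}\,dv .
\]
The change of variables is legitimate because the normalized unitarily invariant measure satisfies $d(Wv)=dv$. Since $\ket{Wv}\bra{Wv}=W\ket v\bra vW^\dagger$, the integral becomes $W E(A)W^\dagger$. Because the phase of $\ket{Wv}$ is irrelevant in the projector, the action on rays is well defined.

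Next, pass to the product POVM \eqref{eq:product_haar_povm_def}. On product sets $A_1\times\cdots\times A_n$, the tensor structure gives
\[
E^{(n)}(W A_1\times\cdots\times WA_n)=\bigotimes_i W E(A_i)W^\dagger=W^{\otimes n}E^{(n)}(A_1\times\cdots\times A_n)W^{\otimes n\dagger}.
\]
Both sides are operator-valued measures agreeing on the $\pi$-system of measurable rectangles. Hence they agree on all Borel subsets of $(\mathbb{CP}^{d-1})^n$, by the standard uniqueness argument applied to each scalar measure $\bra{\xi}\cdot\ket{\xi}$. Alternatively, one can work directly with the density form \eqref{eq:product_haar_density_kernel}: use $O_n(W\vec u)=W^{\otimes n}O_n(\vec u)W^{\otimes n\dagger}$ together with the invariance of the product Haar measure under the diagonal action.

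For the likelihood identity \eqref{eq:product_haar_likelihood_covariance}, use \eqref{eq:local_product_haar_likelihood_density} directly:
\[
p^{(n)}_{W\phi}(W\vec u)=d^n\prod_i|\langle Wu_i|W\phi\rangle|^2=d^n\prod_i|\langle u_i|\phi\rangle|^2=p^{(n)}_\phi(\vec u),
\]
since $W^\dagger W=I$. For the equivalence with the POVM statement, compute
\[
\Tr[\rho_{W\phi}^{\otimes n}E^{(n)}(W\vec A)]=\Tr[W^{\otimes n}\rho_\phi^{\otimes n}W^{\otimes n\dagger}W^{\otimes n}E^{(n)}(\vec A)W^{\otimes n\dagger}]=\Tr[\rho_\phi^{\otimes n}E^{(n)}(\vec A)].
\]
Because the reference measure is invariant under the diagonal action, equality of these probability measures is equivalent to equality of their densities almost everywhere; continuity of the densities upgrades this to everywhere. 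Conversely, if the likelihoods are covariant for all $\phi$, the POVMs agree, since the states $\rho_\phi^{\otimes n}$ span the symmetric subspace on which the symmetric-part expectation values determine them. I would state the POVM direction as primary and not press the converse.

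The only real subtlety is measure-theoretic bookkeeping. One has to interpret $E^{(n)}(d(W\vec u))$ correctly as a pushforward and extend from rectangles to all Borel sets. The algebra itself is immediate, so I expect no genuine obstacle.
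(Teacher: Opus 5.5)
Your proof is correct and follows the same route as the paper: Haar invariance $d(Wu)=du$ together with $\ket{Wu}\bra{Wu}=W\ket{u}\bra{u}W^\dagger$ for one copy, tensoring over the $n$ copies (your $\pi$-system extension from rectangles makes explicit what the paper leaves implicit), and a direct computation of $p^{(n)}_{W\phi}(W\vec u)$ from the product formula. The one inaccurate remark is your aside on the converse: likelihoods for all $\phi$ determine only the compression $\Pi_{\mathrm{sym}}^{(n)}E^{(n)}(\cdot)\Pi_{\mathrm{sym}}^{(n)}$, not $E^{(n)}$ itself, but nothing depends on this since you verify both identities directly.
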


\begin{proof}
For one copy, Haar invariance gives \(d(Wu)=du\), and
\(\ket{Wu}\bra{Wu}=W\ket u\bra u W^\dagger\).  Hence
\begin{equation*}
    E(d(Wu))
    =
    W E(du) W^\dagger.
\end{equation*}
Taking tensor products over the \(n\) independent copies gives
\eqref{eq:product_haar_povm_covariance}.  The likelihood covariance
\eqref{eq:product_haar_likelihood_covariance} follows immediately from
\eqref{eq:local_product_haar_likelihood_density} and the unitary invariance of
inner products.
\end{proof}

\subsection{Posterior predictive state for product Haar-POVM data}
\label{subsec:posterior_mean_fixed_product_haar_povm}
To proceed with Bayesian inference, we use the unitarily invariant Haar measure
\(d\phi\) on the complex projective space of pure-state rays.  The general
Bayesian predictive-state principle already identifies the posterior mean as
the Bayes act.  The model-specific contribution of this section is instead the
exact evaluation of that posterior mean for the fixed product Haar-POVM and,
equivalently, the explicit evaluation of the normalized one-particle marginal
of its compressed outcome effect.

The Bayesian posterior predictive state $\hat{\rho}_n(\vec{u})$ is the exact, continuous integral of the hypothesis state $|\phi\rangle\langle\phi|$ over the hypothesis space, weighted by the normalized posterior distribution. By Bayes' theorem,
\begin{align*}
\hat{\rho}_n(\vec{u}) &= \frac{ \int |\phi\rangle\langle\phi| \Tr\left[(|\phi\rangle\langle\phi|)^{\otimes n} O_n\right] d\phi }{ \int \Tr\left[(|\phi\rangle\langle\phi|)^{\otimes n} O_n\right] d\phi } 
\\
&= \frac{ \Tr_{1:n}\left[ (O_n \otimes I_{n+1}) \int (|\phi\rangle\langle\phi|)^{\otimes n+1} d\phi \right] }{ \Tr\left[ O_n \int (|\phi\rangle\langle\phi|)^{\otimes n} d\phi \right] }.
\end{align*}

\subsection{Evaluating the denominator and Haar integrals}
The required Haar integral is the standard symmetric-subspace identity
associated with Schur--Weyl duality; see, e.g.,
\cite{Group2,Hayashi2017GroupRepresentationQuantumTheory}.  For every \(t\),
the Haar average of \((|\phi\rangle\langle\phi|)^{\otimes t}\) is proportional
to the projector onto the fully symmetric subspace.

The normalization is the dimension of that subspace, giving
\begin{equation*}
\int (|\phi\rangle\langle\phi|)^{\otimes t} d\phi = \frac{\Pi_{\mathrm{sym}}^{(t)}}{\binom{t+d-1}{t}}.
\end{equation*}

Substituting this exact identity into both the numerator ($t = n+1$) and the denominator ($t = n$) of our estimator yields,
\begin{align*}
\hat{\rho}_n(\vec{u}) =& \frac{\binom{n+d-1}{n}}{\binom{n+d}{n+1}} \frac{ \Tr_{1:n}\left[ (O_n \otimes I_{n+1}) \Pi_{\mathrm{sym}}^{(n+1)} \right] }{ \Tr\left[ O_n \Pi_{\mathrm{sym}}^{(n)} \right] } \\
=& \frac{n+1}{n+d} \frac{ \Tr_{1:n}\left[ (O_n \otimes I_{n+1}) \Pi_{\mathrm{sym}}^{(n+1)} \right] }{ \Tr\left[ O_n \Pi_{\mathrm{sym}}^{(n)} \right] }.
\end{align*}

Thus the posterior integrals reduce to finite-dimensional symmetric-subspace calculations.

\subsection{Evaluating the numerator via SWAP decomposition}
To evaluate the trace in the numerator, we must take the partial trace of the $(n+1)$-system symmetric projector over the first $n$ subsystems. We achieve this by algebraically decomposing the symmetric group $S_{n+1}$. The symmetric group $S_{n+1}$ can be partitioned into $n+1$ disjoint cosets by factoring out the subgroup $S_n$. Specifically, we have,
\begin{equation*}
S_{n+1} = \bigcup_{k=1}^{n+1} (k, n+1) S_n,
\end{equation*}
where $S_{k,n+1}$ denotes the transposition operator exchanging the $k$-th and the $(n+1)$-st tensor factors. 

Using this coset partitioning, the fully symmetric projector $\Pi_{\mathrm{sym}}^{(n+1)}$ (which is the average of all permutation operators) can be recursively expanded as follows,
\begin{align*}
\Pi_{\mathrm{sym}}^{(n+1)} &= \frac{1}{(n+1)!} \sum_{k=1}^{n+1} \sum_{\pi \in S_n} S_{k,n+1} (P_\pi \otimes I_{n+1})\\
&= \frac{1}{n+1} \left( I + \sum_{k=1}^n S_{k,n+1} \right) (\Pi_{\mathrm{sym}}^{(n)} \otimes I_{n+1}).
\end{align*}

After inserting this decomposed projector into the numerator and taking the partial trace $\Tr_{1:n}$, the expression splits cleanly. The identity term passes through directly, yielding $I \cdot \Tr[O_n \Pi_{\mathrm{sym}}^{(n)}]$.

For the transposition terms, we first justify the order of the factors under the partial trace. We use the following elementary cyclicity property of the partial trace over the traced tensor factors: if $A$ is an operator acting on the first $n$ tensor factors and $X$ is an operator on the full $(n+1)$-fold tensor product, then
\begin{equation}
\Tr_{1:n}\!\left[(A\otimes I_{n+1})X\right]
=
\Tr_{1:n}\!\left[X(A\otimes I_{n+1})\right].
\label{eq:partial_trace_cyclicity_first_n}
\end{equation}
Indeed, this follows immediately by testing both sides against arbitrary matrix elements on the remaining $(n+1)$-th tensor factor and applying the usual cyclicity of the full trace on the first $n$ tensor factors.
Applying \eqref{eq:partial_trace_cyclicity_first_n} with
$A=O_n$ and
$X=S_{k,n+1}(\Pi_{\mathrm{sym}}^{(n)}\otimes I_{n+1})$, the $k$-th transposition contribution becomes
\begin{align}
&\Tr_{1:n}\!\left[(O_n\otimes I_{n+1})S_{k,n+1}(\Pi_{\mathrm{sym}}^{(n)}\otimes I_{n+1})\right]
\notag\\
&\qquad=
\Tr_{1:n}\!\left[S_{k,n+1}(\Pi_{\mathrm{sym}}^{(n)}\otimes I_{n+1})(O_n\otimes I_{n+1})\right]
\notag\\
&\qquad=
\Tr_{1:n}\!\left[S_{k,n+1}(\Pi_{\mathrm{sym}}^{(n)}O_n\otimes I_{n+1})\right].
\label{eq:transposition_contribution_to_M}
\end{align}
This justifies defining the cross-correlation operator
\begin{equation}
K_{k,\vec{u}} = \Tr_{1:n} \left[S_{k,n+1} (\Pi_{\mathrm{sym}}^{(n)} O_n \otimes I_{n+1})  \right].\label{k_k,u}
\end{equation}

\subsection{Gram matrices and the marginal permanent}
To evaluate the cross-correlation operators $K_{k,\vec{u}}$, we must explicitly trace out the first $n$ subsystems. We analyze an arbitrary matrix element $\langle x | K_{k,\vec{u}} | y \rangle$ on the $(n+1)$-th subsystem. To perform the partial trace over the observed $n$ systems, we insert a complete orthonormal basis of tensor product states $|j_1, j_2, \dots, j_n\rangle$,
\begin{align*}
&\langle x | K_{k,\vec{u}} | y \rangle \\
=& \sum_{j_1 \dots j_n} \langle j_1 \dots j_n, x | S_{k,n+1}(\Pi_{\mathrm{sym}}^{(n)} O_n \otimes I_{n+1})  | j_1 \dots j_n, y \rangle.
\end{align*} 

Before evaluating the cross-correlation operators $K_{k,\vec{u}}$, we establish a fundamental identity regarding the partial trace of an operator coupled with a transposition.

\begin{fact}[Partial SWAP evaluation]\label{fact:part_swap}
When taking the partial trace over a subsystem $k$ of an $n$-partite operator $A$ multiplied by a transposition $S_{k,n+1}$, the SWAP operator effectively substitutes the external $(n+1)$-th basis states directly into the $k$-th tensor slot of $A$. Formally, we have,
\begin{align*}
&\langle x | \Tr_{1:n} \left[ S_{k,n+1} (A \otimes I_{n+1})  \right] | y \rangle\\
 =& \sum_{j_{\setminus k}} \langle j_1 \dots x \dots j_n | A | j_1 \dots y \dots j_n \rangle,
\end{align*}
where $x$ and $y$ are inserted exactly at the $k$-th position of the bra and ket, respectively, and the sum runs over all basis indices except the $k$-th.
\end{fact}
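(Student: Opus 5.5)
The plan is a direct matrix-element computation in the computational basis, using only the action of $S_{k,n+1}$ on product basis vectors. First write the partial trace explicitly as
\[
\langle x|\Tr_{1:n}[Z]|y\rangle=\sum_{j_1,\dots,j_n}\langle j_1\dots j_n,x|Z|j_1\dots j_n,y\rangle ,
\]
with $Z=S_{k,n+1}(A\otimes I_{n+1})$, exactly as in the expansion of $\langle x|K_{k,\vec u}|y\rangle$ above. Since $S_{k,n+1}$ is a real permutation matrix and self-adjoint ($S_{k,n+1}^\dagger=S_{k,n+1}$), let it act to the left on the bra. This gives
\[
\langle j_1\dots j_k\dots j_n,x|S_{k,n+1}=\langle j_1\dots x\dots j_n,j_k| ,
\]
so $x$ now sits in slot $k$ and $j_k$ sits in slot $n+1$.

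Next, factor the matrix element of $A\otimes I_{n+1}$ across the tensor split between the first $n$ factors and the last one:
\[
\langle j_1\dots x\dots j_n,j_k|(A\otimes I)|j_1\dots j_k\dots j_n,y\rangle
=\langle j_1\dots x\dots j_n|A|j_1\dots j_k\dots j_n\rangle\,\langle j_k|y\rangle .
\]
The factor $\langle j_k|y\rangle=\delta_{j_k,y}$, with $y$ identified with its basis index (extend by linearity for general $x,y$). Summing over $j_k$ collapses the $k$-th index to $y$ in the ket. The remaining sum over $j_{\setminus k}$ is exactly the stated formula, with $x$ in the $k$-th bra slot and $y$ in the $k$-th ket slot.

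The only step needing care is the bookkeeping of the swap. One must check that acting on the bra produces $x$ in slot $k$ and not in the ket. Alternatively, one can let $S$ act on the ket and use cyclicity over the first $n$ factors, as in \eqref{eq:partial_trace_cyclicity_first_n}, to reach the same result. Beyond this bookkeeping, the argument is routine.
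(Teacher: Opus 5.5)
Your proposal is correct and follows the paper's proof essentially step for step. Both expand the partial trace in a product basis, let $S_{k,n+1}$ act on the bra to move $x$ into slot $k$ and $j_k$ into slot $n+1$, factor off $\langle j_k|y\rangle$, and sum over $j_k$ to place $y$ in the $k$-th ket slot. The only cosmetic difference is that the paper applies the resolution of the identity in slot $k$ directly, which works for arbitrary $y$, whereas you treat $y$ as a basis index and then extend by linearity.
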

\begin{proof}
See Appendix \ref{proof_fact:part_swap} for the proof.
\end{proof}

Using this fact, we can rigorously determine the explicit functional form of the operator $K_{k,\vec{u}}$ defined in Eq.~\eqref{k_k,u}.

\begin{lemma}\label{lem:m_k_perm}
For a joint measurement operator $O_n = |u_1 \dots u_n\rangle\langle u_1 \dots u_n|$ and the fully symmetric projector $\Pi_{\mathrm{sym}}^{(n)}$, the cross-correlation operator $K_{k,\vec{u}}$ can be expressed in terms of the permanents of the minors of the Gram matrix $G$ (where $G_{a,b} = \langle u_a | u_b \rangle$),
\begin{equation*}
K_{k,\vec{u}} = \frac{1}{n!} \sum_{j=1}^n \Perm(G(k|j)) |u_j\rangle\langle u_k|,
\end{equation*}
where $G(k|j)$ denotes the minor of $G$ obtained by completely removing the $k$-th row and $j$-th column.
\end{lemma}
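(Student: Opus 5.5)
We prove the identity by applying Fact~\ref{fact:part_swap} with $A=\Pi_{\mathrm{sym}}^{(n)}O_n$ and expanding the symmetric projector as $\Pi_{\mathrm{sym}}^{(n)}=\frac{1}{n!}\sum_{\sigma\in S_n}P_\sigma$, where $P_\sigma\ket{v_1\cdots v_n}=\ket{v_{\sigma^{-1}(1)}\cdots v_{\sigma^{-1}(n)}}$. Since $O_n=\ket{u_1\cdots u_n}\bra{u_1\cdots u_n}$, each term $P_\sigma O_n$ is the rank-one operator
\[
P_\sigma O_n=\ket{u_{\sigma^{-1}(1)}\cdots u_{\sigma^{-1}(n)}}\bra{u_1\cdots u_n}.
\]
Its matrix element between product basis vectors factorizes slot by slot.

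Substituting this into Fact~\ref{fact:part_swap}, the element $\langle x|K_{k,\vec u}|y\rangle$ becomes
\[
\frac{1}{n!}\sum_\sigma\sum_{j_{\setminus k}} \langle x|u_{\sigma^{-1}(k)}\rangle\,\langle u_k|y\rangle \prod_{m\neq k}\langle j_m|u_{\sigma^{-1}(m)}\rangle\langle u_m|j_m\rangle .
\]
The sum over each $j_m$ with $m\neq k$ is a resolution of the identity. It turns the corresponding factor into $\langle u_m|u_{\sigma^{-1}(m)}\rangle=G_{m,\sigma^{-1}(m)}$. Therefore
\[
K_{k,\vec u}=\frac{1}{n!}\sum_{\sigma\in S_n}\Bigl(\prod_{m\neq k}G_{m,\sigma^{-1}(m)}\Bigr)\ket{u_{\sigma^{-1}(k)}}\bra{u_k}.
\]

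Next we group the permutations by $j:=\sigma^{-1}(k)$. For a fixed $j$, the map $\tau=\sigma^{-1}$ restricted to rows $\{1,\dots,n\}\setminus\{k\}$ runs bijectively over all bijections onto the columns $\{1,\dots,n\}\setminus\{j\}$. The sum of $\prod_{m\neq k}G_{m,\tau(m)}$ over these bijections is by definition $\Perm(G(k|j))$, the permanent of $G$ with row $k$ and column $j$ deleted. This yields the claimed formula.

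The main obstacle is the bookkeeping of the permutation convention. We must check that the vector placed in slot $m$ is $u_{\sigma^{-1}(m)}$, not $u_{\sigma(m)}$. We must also check that the index assignment puts $\langle u_m|u_{\tau(m)}\rangle$ with $m$ as the row index, so that it is $G_{m,\tau(m)}$ and not its complex conjugate. Either convention gives the same final sum, because $\sigma\mapsto\sigma^{-1}$ is a bijection of $S_n$. However, the deleted row must be $k$, attached to $\bra{u_k}$, and the deleted column must be $j$, attached to $\ket{u_j}$. This orientation matters, since $G$ is Hermitian and not symmetric.

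To verify it we will track the case $n=1$, which gives $K=\ket{u_1}\bra{u_1}$ with an empty permanent equal to $1$. We will also track $n=2$, $k=1$, where the formula reads
\[
K_{1,\vec u}=\tfrac12\bigl(G_{22}\ket{u_1}\bra{u_1}+G_{21}\ket{u_2}\bra{u_1}\bigr),
\]
and check it against a direct computation.
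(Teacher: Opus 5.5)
Your proposal is correct and follows the paper's proof essentially step for step. Like the paper, you apply Fact~\ref{fact:part_swap} with $A=\Pi_{\mathrm{sym}}^{(n)}O_n$, expand the symmetric projector over permutations, use the resolution of the identity on the slots $m\neq k$, and group terms by the image of $k$ to obtain $\Perm(G(k|j))$; the only difference is your convention $P_\sigma\ket{v_1\cdots v_n}=\ket{v_{\sigma^{-1}(1)}\cdots v_{\sigma^{-1}(n)}}$ instead of the paper's $\ket{v_{\pi(1)}\cdots v_{\pi(n)}}$, which, as you note, is immaterial after reindexing by $\tau=\sigma^{-1}$.
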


\begin{proof}
See Appendix \ref{proof_lem:m_k_perm} for the proof.
\end{proof}

To properly normalize this operator, we must independently evaluate the denominator trace $\Tr[O_n \Pi_{\mathrm{sym}}^{(n)}]$. Expanding the symmetric projector into its permutation sum $\frac{1}{n!} \sum_{\sigma \in S_n} P_\sigma$ and evaluating its expectation over the pure product state $O_n = |u_1 \dots u_n\rangle\langle u_1 \dots u_n|$ yields,
\begin{align*}
\Tr[O_n \Pi_{\mathrm{sym}}^{(n)}] =& \frac{1}{n!} \sum_{\sigma \in S_n} \langle u_1 \dots u_n | P_\sigma | u_1 \dots u_n \rangle \\
=& \frac{1}{n!} \sum_{\sigma \in S_n} \prod_{i=1}^n \langle u_i | u_{\sigma(i)} \rangle.
\end{align*}
Recognizing these inner products as the entries of the Gram matrix $G_{i, \sigma(i)}$, this sum over all permutations in $S_n$ directly reconstructs the full permanent of the Gram matrix,
\begin{equation*}
\Tr[O_n \Pi_{\mathrm{sym}}^{(n)}] = \frac{1}{n!} \Perm(G).
\end{equation*}

Define the fully summed cross-correlation operator by
\begin{equation}
    K_{\vec{u}}
    :=
    \sum_{k=1}^n K_{k,\vec{u}}.
    \label{eq:K_vecu_sum_def}
\end{equation}
This is the unnormalized marginal contribution generated by the measurement correlations. Since the Gram matrix $G$ is Hermitian, its minors satisfy
\begin{equation*}
    \overline{\Perm(G(k\mid j))}
    =
    \Perm(G(j\mid k)).
\end{equation*}
Consequently, the fully summed operator $K_{\vec u}=\sum_k K_{k,\vec u}$ is Hermitian. Its positivity is proved in Lemma~\ref{lem:marginal_psd}. Its trace is
\begin{equation*}
\Tr K_{\vec{u}}
=
\sum_{k=1}^n \Tr\left[  S_{k,n+1} (\Pi_{\mathrm{sym}}^{(n)} O_n \otimes I_{n+1})  \right].
\end{equation*}
By the trace-SWAP identity in Fact~\ref{fact:trace_swap_identity}, applied to
\(A=\Pi_{\mathrm{sym}}^{(n)}O_n\), each individual term satisfies
\[
    \Tr[K_{k,\vec{u}}]
    =
    \Tr[O_n\Pi_{\mathrm{sym}}^{(n)}].
\]
Therefore
\begin{equation}
    \Tr K_{\vec{u}}
    =
    n\Tr[O_n\Pi_{\mathrm{sym}}^{(n)}].
    \label{eq:tr_K_u}
\end{equation}
We now define the trace-normalized empirical marginal state by
\begin{equation}
    \widetilde{\rho}_{\mathrm{marg}}(\vec{u})
    :=
    \frac{K_{\vec{u}}}{n\Tr[O_n\Pi_{\mathrm{sym}}^{(n)}]}
    =
    \frac{1}{n\Tr[O_n\Pi_{\mathrm{sym}}^{(n)}]}
    \sum_{k=1}^n K_{k,\vec{u}}.
    \label{eq:rho_marg_tilde_trace_normalized_def}
\end{equation}
By \eqref{eq:tr_K_u}, this operator satisfies $\Tr[\widetilde{\rho}_{\mathrm{marg}}(\vec{u})]=1$.
Reassembling the full estimator with the expanded denominator trace gives the exact posterior predictive rule
\begin{equation}
\begin{split}
 \widehat\rho_n(\vec u)
 &=\frac{n+1}{n+d}\,
   \frac{I\,\Tr[O_n\Pi_{\rm sym}^{(n)}]+K_{\vec u}}
        {(n+1)\Tr[O_n\Pi_{\rm sym}^{(n)}]}\\
 &=\frac{n\widetilde\rho_{\rm marg}(\vec u)+I}{n+d}.
\end{split}
\label{eq:local_posterior_predictive_state}
\end{equation}
We refer to $\widehat\rho_n(\vec u)$ as the product-Haar posterior predictive state, or simply the local posterior predictive state when the observation model is clear.
This identity is an exact consequence of the Haar posterior.  It separates the outcome-dependent normalized one-particle marginal from the universal identity contribution.  In particular, $\widehat\rho_n(\vec u)\ge I/(n+d)$, so every realized posterior predictive state is full rank.

\begin{remark}[Relation to the arbitrary-effect identity]
After compression of the product Haar-POVM outcome effect to the symmetric
subspace, the arbitrary-effect identity of the companion paper~\cite{CompanionGlobal} gives the abstract
form
\[
    \hat\rho_n(\vec u)=\frac{I+n\rho_{E_{\vec u}}}{n+d}.
\]
The calculation in this section identifies the normalized one-particle marginal
\(\rho_{E_{\vec u}}\) explicitly as
\(\widetilde\rho_{\mathrm{marg}}(\vec u)\).  Thus the abstract identity
and the local permanent calculation have distinct roles: the former determines
the universal identity-regularized form, whereas the latter evaluates the
outcome-dependent marginal itself.  The Gram-minor permanent coefficients are the concrete product-Haar-POVM
realization of this general effect marginal.
\end{remark}

\begin{lemma}[Positivity of the empirical marginal state]\label{lem:marginal_psd}
The fully summed cross-correlation operator $K_{\vec{u}}$ is a positive semi-definite operator. Consequently, its trace-normalized counterpart $\widetilde{\rho}_{\mathrm{marg}}(\vec{u})$ satisfies the requirements of a valid quantum density operator.
\end{lemma}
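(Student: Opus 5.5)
The plan is to avoid the permanent formula of Lemma~\ref{lem:m_k_perm} entirely and instead exhibit $K_{\vec u}$ as a partial trace of a manifestly positive operator. The natural candidate is
\[
 Q:=(\Pi_{\rm sym}^{(n)}\otimes I_{n+1})^{\dagger}\,(O_n\otimes I_{n+1})\,\dots
\]
built from the symmetric-projector expansion already used above. Concretely, the starting point is
\[
 \Tr_{1:n}\!\left[(O_n\otimes I_{n+1})\Pi_{\rm sym}^{(n+1)}\right]
 =\frac{1}{n+1}\left(I\,\Tr[O_n\Pi_{\rm sym}^{(n)}]+K_{\vec u}\right).
\]
Since $\Pi_{\rm sym}^{(n+1)}$ is a projector, $O_n\otimes I_{n+1}\ge0$, and the partial-trace cyclicity \eqref{eq:partial_trace_cyclicity_first_n} applies, the left side can be rewritten as $\Tr_{1:n}[\Pi_{\rm sym}^{(n+1)}(O_n\otimes I)\Pi_{\rm sym}^{(n+1)}]$. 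This is positive semidefinite, but it only shows $K_{\vec u}\ge -\Tr[O_n\Pi_{\rm sym}^{(n)}]\,I$, which is too weak. A sharper representation is therefore needed.

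For the sharper representation, I would write $\ket{v}:=\Pi_{\rm sym}^{(n)}\ket{u_1\dots u_n}$, so that $\Pi_{\rm sym}^{(n)}O_n\Pi_{\rm sym}^{(n)}=\ket v\bra v$. The first step is to replace $\Pi_{\rm sym}^{(n)}O_n$ in the definition \eqref{k_k,u} by $\Pi_{\rm sym}^{(n)}O_n\Pi_{\rm sym}^{(n)}$. This is legitimate because, after summing over $k$, the operator $\sum_k S_{k,n+1}$ commutes with $P_\pi\otimes I$ for $\pi\in S_n$. Averaging over $S_n$ under the partial trace therefore lets me insert a projector on the right, using cyclicity \eqref{eq:partial_trace_cyclicity_first_n} for the operator $P_\pi$ acting on the first $n$ factors. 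This step must be checked carefully. The result should be
\[
 K_{\vec u}=\sum_{k=1}^n\Tr_{1:n}\!\left[S_{k,n+1}(\ket v\bra v\otimes I)\right].
\]
Next I would apply Fact~\ref{fact:part_swap} with $A=\ket v\bra v$. This gives $\bra x K_{\vec u}\ket y=\sum_k\sum_{j_{\setminus k}}\langle j_1..x..j_n|v\rangle\langle v|j_1..y..j_n\rangle$. Because $v$ is symmetric, every $k$ contributes the same amount, so $K_{\vec u}=n\,\Tr_{2:n}\ket v\bra v$ (up to complex conjugation conventions, i.e.\ possibly the transpose). That is $n$ times the one-particle reduced density matrix of $\ket v$, which is positive semidefinite.

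The main obstacle is the symmetrization step, namely justifying the insertion of the second $\Pi_{\rm sym}^{(n)}$. If the commutation argument becomes messy, I would instead verify the identity directly from Lemma~\ref{lem:m_k_perm}. By the permanent Laplace-type expansion, $\Tr_{2:n}\ket v\bra v$ has matrix elements $\frac{1}{n!\,n}\sum_{k,j}\Perm(G(k|j))\ket{u_j}\bra{u_k}$ in the same form, so I would match coefficients. Normalization then follows from \eqref{eq:tr_K_u}: $\Tr K_{\vec u}=n\Tr[O_n\Pi_{\rm sym}^{(n)}]=n\langle v|v\rangle>0$ whenever the Gram permanent is nonzero, which holds almost surely. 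Hence $\widetilde\rho_{\rm marg}(\vec u)$ is positive with unit trace.
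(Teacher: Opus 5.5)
Your proposal is correct, and it reaches positivity by a genuinely different route from the paper. The paper works with coefficients. It uses the permanent formula of Lemma~\ref{lem:m_k_perm} to factor $K_{\vec u}=VC_{\vec u}V^\dagger$ with $V\ket{e_i}=\ket{u_i}$. It then proves the combinatorial identity $\langle W_k\mid W_j\rangle=\Perm(G(k\mid j))/(n-1)!$ for the leave-one-out vectors $\ket{W_j}=\Pi_{\rm sym}^{(n-1)}\ket{U_{\setminus j}}$, so that $C_{\vec u}=L/n$ is a Gram matrix.

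You stay at the operator level instead, and the step you flagged as the main obstacle is sound. Set $T=\sum_kS_{k,n+1}$, which commutes with every $P_\pi\otimes I$. Move the right-hand $\Pi_{\rm sym}^{(n)}$ to the left with \eqref{eq:partial_trace_cyclicity_first_n}, commute it through $T$, and absorb it by $(\Pi_{\rm sym}^{(n)})^2=\Pi_{\rm sym}^{(n)}$. This replaces $\Pi_{\rm sym}^{(n)}O_n$ by $\ket v\bra v$. Fact~\ref{fact:part_swap} then shows that the $k$-th summand is exactly the slot-$k$ marginal of $\ket v\bra v$. So $K_{\vec u}=n\Tr_{2:n}\ket v\bra v$, with no transpose. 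Your fallback coefficient formula $\Tr_{2:n}\ket v\bra v=\frac{1}{n\cdot n!}\sum_{j,k}\Perm(G(k\mid j))\ket{u_j}\bra{u_k}$ is also correct.

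The two proofs describe the same object. Decomposing $\ket v=\frac1n\sum_j\ket{u_j}\otimes\ket{W_j}$ across the first factor gives $\Tr_{2:n}\ket v\bra v=n^{-2}VLV^\dagger$. So the paper's Gram matrix $L$ is exactly the coefficient matrix of your reduced state.

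What each route buys:
\begin{itemize}
\item Your route is basis-free and needs neither Lemma~\ref{lem:m_k_perm} nor Fact~\ref{fact:trace_swap_identity}. The normalization $\Tr K_{\vec u}=n\langle v\mid v\rangle=n\Tr[O_n\Pi_{\rm sym}^{(n)}]$ comes for free. It also makes the Remark's identification of $\widetilde\rho_{\rm marg}(\vec u)$ with the normalized one-particle marginal of the compressed effect literal.
\item The paper's route gives an explicit $n\times n$ positivity certificate written directly in the permanent coefficients used for computation.
\end{itemize}

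Two minor points. First, the discarded first attempt contains a false step. The cyclicity \eqref{eq:partial_trace_cyclicity_first_n} only moves operators supported on the first $n$ factors, so $\Pi_{\rm sym}^{(n+1)}$ cannot be cycled. For $n=1$ the two sides are $\frac12(I+\ket u\bra u)$ and $\frac14(I+(d+2)\ket u\bra u)$. The left side is still positive, by its Haar-integral representation. Since you abandon this route, it does not affect your proof.

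Second, the normalizer is positive for every outcome, not merely almost surely. Indeed $\Tr[O_n\Pi_{\rm sym}^{(n)}]=\binom{n+d-1}{n}\int\prod_i|\langle u_i\mid\phi\rangle|^2\,d\phi>0$, because the continuous nonnegative integrand vanishes only on a finite union of projective hyperplanes.
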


\begin{proof}
Recall the explicit expression
\begin{equation}
    K_{\vec u}
    =
    \frac{1}{n!}
    \sum_{k=1}^n\sum_{j=1}^n
    \Perm(G(k\mid j))
    \ket{u_j}\bra{u_k}.
    \label{eq:psd_proof_K_explicit_recall}
\end{equation}
Let
\begin{equation}
    V:
    \mathbb C^n\to\mathcal H,
    \qquad
    V\ket{e_i}=\ket{u_i},
    \label{eq:psd_proof_synthesis_map}
\end{equation}
where \(\{\ket{e_i}\}_{i=1}^n\) is the standard basis of \(\mathbb C^n\).
Define the permanent-weight matrix \(C_{\vec u}\) on \(\mathbb C^n\) by
\begin{equation}
    (C_{\vec u})_{j,k}
    :=
    \frac{1}{n!}\Perm(G(k\mid j)).
    \label{eq:psd_proof_C_def}
\end{equation}
Then \eqref{eq:psd_proof_K_explicit_recall} is exactly the factorization
\begin{equation}
    K_{\vec u}
    =
    V C_{\vec u} V^\dagger .
    \label{eq:psd_proof_K_factorization}
\end{equation}
It remains to show that \(C_{\vec u}\ge0\).

For each \(j\), let
\begin{equation}
    \ket{U_{\setminus j}}
    :=
    \bigotimes_{\substack{i=1\\ i\ne j}}^n \ket{u_i},
    \label{eq:psd_proof_leave_one_out_vector}
\end{equation}
where the tensor factors are kept in their original order after deleting the
\(j\)-th entry.  Set
\begin{equation}
    \ket{W_j}
    :=
    \Pi_{\mathrm{sym}}^{(n-1)}\ket{U_{\setminus j}}
    \in \mathcal H^{\otimes(n-1)}.
    \label{eq:psd_proof_projected_leave_one_out}
\end{equation}
Consider the Gram matrix
\begin{equation}
    L_{j,k}
    :=
    \langle W_k\mid W_j\rangle.
    \label{eq:psd_proof_L_def}
\end{equation}
By construction, \(L\) is positive semidefinite.  Expanding the symmetric
projection gives
\begin{align}
    L_{j,k}
    &=
    \left\langle
        U_{\setminus k}
    \middle|
        \Pi_{\mathrm{sym}}^{(n-1)}
    \middle|
        U_{\setminus j}
    \right\rangle 
    \notag\\
    &=
    \frac{1}{(n-1)!}
    \sum_{\substack{\sigma\in S_n:\\ \sigma(k)=j}}
    \prod_{m\ne k}\langle u_m\mid u_{\sigma(m)}\rangle
    \notag\\
    &=
    \frac{1}{(n-1)!}\Perm(G(k\mid j)).
    \label{eq:psd_proof_L_perm_identity}
\end{align}
Here the second line identifies a permutation of the reduced ordered lists with
a permutation \(\sigma\in S_n\) satisfying \(\sigma(k)=j\).  Comparing
\eqref{eq:psd_proof_C_def} and \eqref{eq:psd_proof_L_perm_identity}, we obtain
\begin{equation}
    C_{\vec u}
    =
    \frac{1}{n}L.
    \label{eq:psd_proof_C_equals_L_over_n}
\end{equation}
Therefore \(C_{\vec u}\ge0\).  The factorization
\eqref{eq:psd_proof_K_factorization} then implies \(K_{\vec u}\ge0\).
Finally,
\[
    \widetilde\rho_{\mathrm{marg}}(\vec u)
    =
    \frac{K_{\vec u}}{n\Tr[O_n\Pi_{\mathrm{sym}}^{(n)}]}
\]
has unit trace by \eqref{eq:tr_K_u}.  Hence it is a positive semidefinite
trace-one operator, i.e. a density operator.
\end{proof}

\begin{lemma}[Pointwise covariance of the local posterior predictive state]
\label{lem:pointwise_covariance_quantum_local}
Let \(W\in U(d)\) and write \(W\vec u=(Wu_1,\ldots,Wu_n)\).  Then
\begin{equation}
    \hat\rho_n(W\vec u)
    =
    W\hat\rho_n(\vec u)W^\dagger.
    \label{eq:pointwise_covariance_quantum_local}
\end{equation}
\end{lemma}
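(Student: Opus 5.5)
Two routes establish the covariance identity, and I will give both; either suffices. The first works directly from the Bayesian integral defining $\hat\rho_n(\vec u)$. The second checks the claim on the explicit permanent formula \eqref{eq:local_posterior_predictive_state}.

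\emph{Integral route.} I will write
\[
\hat\rho_n(\vec u)=\frac{\int \rho_\phi\, p^{(n)}_\phi(\vec u)\,d\phi}{\int p^{(n)}_\phi(\vec u)\,d\phi},
\]
using the likelihood \eqref{eq:local_product_haar_likelihood_density}. Evaluating at $W\vec u$ and applying Lemma~\ref{lem:product_haar_povm_covariance} in the form $p^{(n)}_\phi(W\vec u)=p^{(n)}_{W^\dagger\phi}(\vec u)$, I then change variables $\phi=W\phi'$. The Haar measure on rays is unitarily invariant, so $d\phi=d\phi'$. Since $\rho_{W\phi'}=W\rho_{\phi'}W^\dagger$, the numerator becomes $W\bigl(\int\rho_{\phi'}p^{(n)}_{\phi'}(\vec u)\,d\phi'\bigr)W^\dagger$. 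The denominator is unchanged, and the identity follows.

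The only delicate point is that the ratio must be well defined pointwise. The denominator equals $\frac{d^n}{\binom{n+d-1}{n}}\Tr[O_n\Pi_{\rm sym}^{(n)}]=\frac{d^n}{\binom{n+d-1}{n}}\frac{\Perm(G)}{n!}$. This quantity is $\|\Pi_{\rm sym}^{(n)}\ket{u_1\cdots u_n}\|^2$, which is strictly positive because $\langle u^{\otimes n}|u_1\cdots u_n\rangle=\prod_i\langle u|u_i\rangle\neq0$ for generic $u$. So the formula holds for every $\vec u$, not merely almost everywhere, and the same positivity holds at $W\vec u$.

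\emph{Algebraic route.} The Gram matrix is invariant: $G_{a,b}(W\vec u)=\langle Wu_a|Wu_b\rangle=G_{a,b}(\vec u)$. Hence every minor permanent $\Perm(G(k\mid j))$ and $\Perm(G)$ is unchanged. Lemma~\ref{lem:m_k_perm} then gives
\[
K_{W\vec u}=\frac{1}{n!}\sum_{j,k}\Perm(G(k\mid j))\,W\ket{u_j}\bra{u_k}W^\dagger=WK_{\vec u}W^\dagger.
\]
The normalization $n\Tr[O_n\Pi^{(n)}_{\rm sym}]=\Perm(G)/(n-1)!$ is invariant, so $\widetilde\rho_{\rm marg}(W\vec u)=W\widetilde\rho_{\rm marg}(\vec u)W^\dagger$. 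Finally, because $WIW^\dagger=I$, \eqref{eq:local_posterior_predictive_state} yields the claim.

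One caveat is phases. Rays determine the vectors $\ket{u_i}$ only up to phase, so $G$ depends on representatives. However, the combination $\Perm(G(k\mid j))\ket{u_j}\bra{u_k}$ is phase-invariant: a phase $e^{i\theta_a}$ on $u_a$ multiplies each product term by a total phase that cancels. I therefore fix any representatives and use $W\ket{u_i}$ as the representatives for $W\vec u$.

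I do not expect a genuine obstacle. The main care needed is the pointwise well-definedness, that is, positivity of $\Perm(G)$, and the phase bookkeeping in the algebraic route.
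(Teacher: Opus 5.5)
Both of your routes are correct. The algebraic route is exactly the paper's proof. Invariance of the Gram matrix leaves every $\Perm(G(k\mid j))$ and the normalization $\Perm(G)/n!$ unchanged. Then $\ket{Wu_j}\bra{Wu_k}=W\ket{u_j}\bra{u_k}W^\dagger$ gives $K_{W\vec u}=WK_{\vec u}W^\dagger$, hence covariance of $\widetilde\rho_{\rm marg}$ and then of $\hat\rho_n$.

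Your integral route is genuinely different. It uses only the likelihood covariance of Lemma~\ref{lem:product_haar_povm_covariance} and the unitary invariance of the prior. It never touches the permanent evaluation of Lemma~\ref{lem:m_k_perm}. So it would hold for any covariant observation with an invariant prior, and it does not depend on the combinatorial calculation being correct. The paper's route instead verifies covariance directly on the explicit permanent formula that is used afterwards, and it gives covariance of $\widetilde\rho_{\rm marg}$ immediately. Your integral route recovers that as well, because $\widetilde\rho_{\rm marg}=((n+d)\hat\rho_n-I)/n$.

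You also check two points the paper leaves implicit, and both checks are correct. First, $\Perm(G)>0$ for every $\vec u$, which makes the ratio well defined pointwise. Second, $\Perm(G(k\mid j))\ket{u_j}\bra{u_k}$ does not depend on the choice of phase representatives.
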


\begin{proof}
The Gram matrix of \(W\vec u\) has entries
\(\langle Wu_a\mid Wu_b\rangle=\langle u_a\mid u_b\rangle\), and hence is the
same as the Gram matrix of \(\vec u\).  Therefore all permanent weights in
Lemma~\ref{lem:m_k_perm} are unchanged by the simultaneous rotation of all
outcomes.  The rank-one operators transform as
\begin{equation*}
    \ket{Wu_j}\bra{Wu_k}
    =
    W\ket{u_j}\bra{u_k}W^\dagger.
\end{equation*}
Thus \(K_{W\vec u}=WK_{\vec u}W^\dagger\), and the normalization factor
\(\Tr[O_n\Pi_{\mathrm{sym}}^{(n)}]=\Perm(G)/n!\) is unchanged.  Hence
\begin{equation*}
    \widetilde\rho_{\mathrm{marg}}(W\vec u)
    =
    W\widetilde\rho_{\mathrm{marg}}(\vec u)W^\dagger.
\end{equation*}
Substituting this into
\(\hat\rho_n=(n\widetilde\rho_{\mathrm{marg}}+I)/(n+d)\) gives
\eqref{eq:pointwise_covariance_quantum_local}.
\end{proof}

\subsection{Bayesian prediction and decision-rule minimax optimality for product Haar-POVM data}
\label{subsec:local_bayes_minimax_product_haar_povm_data}
\label{subsec:posterior_relative_entropy_bayes_optimality}
\label{subsec:product_haar_povm_minimax_optimality}

We now record the predictive and decision-theoretic status of the local posterior predictive state in
the product Haar-POVM setting.  Let \(\mathcal D_n\) be the class of all
measurable decision rules
\begin{equation}
    \delta:
    \vec u
    \longmapsto
    \delta(\vec u)\in\mathcal S(\mathcal H)
    \label{eq:local_decision_rule_class_def}
\end{equation}
based on the outcomes of the product Haar-POVM \(E^{(n)}\).  For
\(\delta\in\mathcal D_n\), define the frequentist relative-entropy risk by
\begin{equation}
    R_\delta(\psi)
    :=
    \mathbb E_{\vec u\mid\psi}
    D\!\left(
        \rho_\psi
        \middle\|
        \delta(\vec u)
    \right),
    \label{eq:product_haar_povm_frequentist_risk_def}
\end{equation}
and the corresponding product-Haar-POVM decision-rule minimax risk by
\begin{equation}
    R_n^{\mathrm{mm}}
    :=
    \inf_{\delta\in\mathcal D_n}
    \sup_\psi R_\delta(\psi).
    \label{eq:product_haar_povm_minimax_risk_def}
\end{equation}
This minimax problem is only over decision rules after the independent
product Haar-POVM setting has been fixed; it is not a minimax problem over all
collective quantum measurements.

\begin{proposition}[Posterior predictive state for product Haar-POVM data]
\label{prop:local_posterior_mean_bayes_act}
For product Haar-POVM data and the Haar prior, the unique Bayes act for quantum
relative-entropy loss is the posterior mean
\begin{equation}
    \hat\rho_n(\vec u)
    =
    \mathbb E[\rho_\Phi\mid\vec u].
    \label{eq:local_posterior_mean_bayes_act}
\end{equation}
Equivalently, it has the identity-regularized posterior form
\begin{equation}
    \hat\rho_n(\vec u)
    =
    \frac{n\widetilde\rho_{\mathrm{marg}}(\vec u)+I}{n+d}.
    \label{eq:posterior_mean_evaluated_as_quantum_local_main}
\end{equation}
\end{proposition}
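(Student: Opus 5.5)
The proof has two parts. The first is the general Bayes-act property specialized to the product observation. The second is the identification of that posterior mean with the already computed closed form \eqref{eq:local_posterior_predictive_state}.

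For the Bayes-act claim, I would invoke the fixed-observation decomposition of Section~\ref{sec:bayes_tools_p2} with $M^{(n)}=E^{(n)}$. The posterior $\pi(d\phi\mid\vec u)$ has density proportional to $p^{(n)}_\phi(\vec u)=d^n\prod_i|\langle u_i|\phi\rangle|^2$ with respect to $d\phi$. The normalizer is $d^n\Tr[O_n\int\rho_\phi^{\otimes n}d\phi]=d^n\Perm(G)/(n!\binom{n+d-1}{n})$. Since the $u_i$ are unit vectors, $\Perm(G)\ge0$, and $\Perm(G)>0$ except on a set of outcome-marginal measure zero: the output marginal has density equal to this normalizer, so it vanishes wherever the normalizer does. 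Hence the posterior is well defined for almost every $\vec u$, and $\mu(\vec u)=\int\rho_\phi\,\pi(d\phi\mid\vec u)=\mathbb E[\rho_\Phi\mid\vec u]$.

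For every $\sigma$, the expected conditional loss equals $S(\mu)+D(\mu\Vert\sigma)$ when $\operatorname{supp}\sigma\supseteq\operatorname{supp}\mu$, and equals $+\infty$ otherwise. Since $\mu$ will turn out to be full rank, the second case never competes. Strict positivity of $D(\mu\Vert\sigma)$ for $\sigma\neq\mu$ (Klein's inequality) then gives uniqueness of the minimizer $\sigma=\mu(\vec u)$ pointwise. It follows that any Bayes rule agrees with $\mu$ almost everywhere. The Bayes risk itself is finite because $S(\mu)\le\log d$.

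For the explicit form, I would observe that the displayed Bayes-theorem ratio in Subsection~\ref{subsec:posterior_mean_fixed_product_haar_povm} is exactly $\int\rho_\phi\,\pi(d\phi\mid\vec u)$; the factor $d^n$ cancels between numerator and denominator. The subsequent steps are already done in the paper, and I would assemble them in order:
\begin{enumerate}
\item the symmetric-subspace moments, which produce the prefactor $(n+1)/(n+d)$;
\item the coset expansion of $\Pi^{(n+1)}_{\rm sym}$;
\item the cyclicity step \eqref{eq:partial_trace_cyclicity_first_n}, which gives $K_{k,\vec u}$;
\item the trace identity \eqref{eq:tr_K_u}.
\end{enumerate}
Together these give \eqref{eq:local_posterior_predictive_state}, i.e.\ \eqref{eq:posterior_mean_evaluated_as_quantum_local_main}. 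Lemma~\ref{lem:marginal_psd} guarantees that $\widetilde\rho_{\rm marg}$ is a state, so $\hat\rho_n\ge I/(n+d)$ is full rank. This closes the support caveat used in the uniqueness step.

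The main obstacle is the measure-theoretic care rather than the algebra. One must ensure the denominator $\Perm(G)$ is positive almost surely, and that ``unique'' is read as uniqueness up to output-marginal null sets, including the case of loss $+\infty$ when $\sigma$ is rank-deficient. I would handle positivity by writing $\Perm(G)/n!=\|\Pi^{(n)}_{\rm sym}\ket{u_1\cdots u_n}\|^2$. Under the product Haar measure this is almost surely nonzero, because a generic product vector has nonzero overlap with, for instance, $\ket{v}^{\otimes n}$ for suitable $v$.
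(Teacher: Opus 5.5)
Your proposal is correct and takes essentially the same route as the paper: the paper's proof applies the fixed-observation identity $\int D(\rho_\phi\Vert\sigma)\,\pi(d\phi\mid x)=S(\mu)+D(\mu\Vert\sigma)$ of Section~\ref{sec:bayes_tools_p2} with $x=\vec u$, then cites the preceding symmetric-subspace and permanent computation, while your treatment of uniqueness, the $+\infty$ case and positivity of $\Perm(G)$ supplies details the paper leaves implicit. One small sharpening: your $\ket{v}^{\otimes n}$ argument gives $\Perm(G)>0$ for every outcome, not just generically, because some $v$ is non-orthogonal to all the $u_i$ (so the posterior is defined pointwise); also, the full-rank property is not needed for uniqueness, since $\sigma=\mu$ already has finite loss.
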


\begin{proof}
The Bayes-act statement is the fixed-observation Bayes identity in Section~\ref{sec:bayes_tools_p2}
applied to the product Haar-POVM \(E^{(n)}\), with the outcome \(x\) equal to
the complete sequence \(\vec u\).  The additional content here is the explicit
Haar-integral and one-particle-marginal calculation, which evaluates that
posterior mean and gives exactly
\eqref{eq:posterior_mean_evaluated_as_quantum_local_main}.
\end{proof}

\begin{corollary}[Decision-rule minimax optimality for product Haar-POVM data]
\label{cor:local_decision_rule_minimax_product_haar_povm_data}
\label{thm:product_haar_povm_minimax_optimality}
Within the class \(\mathcal D_n\) of decision rules based on product
Haar-POVM outcomes, the local posterior predictive state is
minimax for quantum relative-entropy loss:
\begin{equation}
    \sup_\psi R_{\hat\rho_n}(\psi)
    =
    \inf_{\delta\in\mathcal D_n}
    \sup_\psi R_\delta(\psi).
    \label{eq:product_haar_povm_minimax_optimality}
\end{equation}
\end{corollary}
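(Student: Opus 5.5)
The plan is the standard "Bayes rule with constant risk is minimax" argument, using the covariance already established. First I will show that the frequentist risk $R_{\hat\rho_n}(\psi)$ does not depend on $\psi$. Fix $W\in U(d)$ and change variables $\vec u\mapsto W\vec u$ in
\[
R_{\hat\rho_n}(W\psi)=\int p^{(n)}_{W\psi}(\vec u)\,D(\rho_{W\psi}\Vert\hat\rho_n(\vec u))\,d\vec u .
\]
The product Haar measure is invariant under this change of variables. Lemma~\ref{lem:product_haar_povm_covariance} gives $p^{(n)}_{W\psi}(W\vec u)=p^{(n)}_\psi(\vec u)$, and Lemma~\ref{lem:pointwise_covariance_quantum_local} gives $\hat\rho_n(W\vec u)=W\hat\rho_n(\vec u)W^\dagger$. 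Unitary invariance of relative entropy then yields $D(W\rho_\psi W^\dagger\Vert W\hat\rho_n W^\dagger)=D(\rho_\psi\Vert\hat\rho_n)$. Since $U(d)$ acts transitively on rays, $R_{\hat\rho_n}(\psi)\equiv r_n$ is constant. It is also finite, because $\hat\rho_n\ge I/(n+d)$ gives $D(\rho_\psi\Vert\hat\rho_n)\le\log(n+d)$.

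Next I will compare an arbitrary $\delta\in\mathcal D_n$ with $\hat\rho_n$ through the Bayes risk under the Haar prior:
\[
\sup_\psi R_\delta(\psi)\ \ge\ \int R_\delta(\psi)\,d\psi
=\bbE_{\vec u}\!\int D(\rho_\phi\Vert\delta(\vec u))\,\pi(d\phi\mid\vec u).
\]
The equality is Fubini/Tonelli, which applies because the integrand is nonnegative (possibly $+\infty$). I will then apply the fixed-observation identity of Section~\ref{sec:bayes_tools_p2}. For almost every $\vec u$, if $\operatorname{supp}\delta(\vec u)\supseteq\operatorname{supp}\hat\rho_n(\vec u)$, the conditional loss equals $S(\hat\rho_n(\vec u))+D(\hat\rho_n(\vec u)\Vert\delta(\vec u))\ge S(\hat\rho_n(\vec u))$.

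The support condition is the one technical point to check. Since $\hat\rho_n$ is full rank, a rule $\delta(\vec u)$ that is not full rank must be treated separately. In that case, a set of $\phi$ of positive posterior mass has $\rho_\phi$ outside $\operatorname{supp}\delta(\vec u)$: the posterior density is positive on an open dense set, and the rays lying inside a proper subspace form a null set. So the conditional loss is $+\infty\ge S(\hat\rho_n(\vec u))$ and the inequality still holds.

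Integrating over $\vec u$ gives
\[
\int R_\delta\,d\psi\ \ge\ \bbE_{\vec u}\,S(\hat\rho_n(\vec u))=\int R_{\hat\rho_n}\,d\psi=r_n .
\]
The last equalities reuse the same identity with $\delta=\hat\rho_n$, where the relative-entropy term vanishes. Hence $\sup_\psi R_\delta(\psi)\ge r_n=\sup_\psi R_{\hat\rho_n}(\psi)$ for every $\delta\in\mathcal D_n$. Taking the infimum over $\delta$, and noting that $\hat\rho_n\in\mathcal D_n$ is measurable because it is built from permanents of the Gram matrix, gives \eqref{eq:product_haar_povm_minimax_optimality}.

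I expect no serious obstacle. The only care points are the measure-theoretic ones: the change of variables in the constant-risk step, and the infinite-loss handling for rank-deficient $\delta$.
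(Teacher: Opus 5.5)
Your proposal is correct and follows the same route as the paper. The paper uses covariance of the model (Lemma~\ref{lem:product_haar_povm_covariance}) and of the posterior rule (Lemma~\ref{lem:pointwise_covariance_quantum_local}) to get constant risk, and then concludes that this Haar-Bayes act is minimax because every competitor's worst-case risk dominates its Haar-Bayes risk; your version makes explicit the details the paper leaves under its ``Hunt--Stein averaging'' remark, namely the change of variables, the bound $D(\rho_\psi\Vert\hat\rho_n)\le\log(n+d)$, and the $+\infty$ conditional loss for rank-deficient $\delta(\vec u)$.
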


\begin{proof}
By Lemma~\ref{lem:product_haar_povm_covariance}, the product Haar-POVM model is
covariant.  By Lemma~\ref{lem:pointwise_covariance_quantum_local}, the local posterior predictive state is pointwise covariant.  Proposition~\ref{prop:local_posterior_mean_bayes_act}
shows that it attains the Haar-Bayes value among all decision rules in
\(\mathcal D_n\).

The minimax conclusion is the specialization to decision rules based on the product Haar-POVM outcomes of the
Hunt--Stein averaging argument in the compact-group averaging argument: for a
covariant observation model and a covariant decision rule, the frequentist risk
is constant on the unitary orbit of pure states, hence its Haar-Bayes risk equals
its worst-case risk.  Therefore the Haar-Bayes value and the minimax value
coincide within \(\mathcal D_n\), and the local posterior predictive state is
decision-rule minimax.
\end{proof}

\begin{remark}[Scope of the local minimax statement]
The corollary compares only decision rules that map the observed product Haar-POVM outcome sequence to a density matrix.  The product measurement is specified in advance.  Thus this is not a minimax statement over all quantum
measurements on \(\rho_\psi^{\otimes n}\), and it does not include collective
measurements or other POVMs in the optimization.
\end{remark}

\section{Finite-Sample Risk and Sample Complexity}
\label{sec:finite_sample_risk_complexity}
This section concerns the exact posterior predictive state for the original
product Haar observation.  It is independent of the randomized-mesh MLE
analysis in Section~\ref{sec:rmesh_main}.  In particular, all bounds below hold
for every finite $n$ and $d$, without choosing a mesh or taking an asymptotic
limit.

\subsection{Hilbert--Schmidt projection and overlap}
Let $\Phi$ have the Haar prior, let $\vec U$ be the product Haar outcome, and
write
\[
 \widehat\rho_n=\mathbb E[\rho_\Phi\mid\vec U].
\]
For any square-integrable Hermitian matrix-valued estimator $T(\vec U)$, the
conditional-expectation Pythagorean identity gives
\begin{equation}
 \mathbb E\|\rho_\Phi-T\|_2^2
 =\mathbb E\|\rho_\Phi-\widehat\rho_n\|_2^2
 +\mathbb E\|\widehat\rho_n-T\|_2^2.
 \label{eq:hs_pythagorean_posterior}
\end{equation}
Thus the posterior mean minimizes Bayesian Hilbert--Schmidt risk over all
square-integrable matrix-valued estimators measurable with respect to the
observed data.  The comparison class here is not restricted to density
matrices.

The local posterior overlap, equivalently the mean posterior purity, is
\begin{equation}
 \mathcal P_{n,d}^{\rm loc}
 :=\mathbb E\operatorname{Tr}(\widehat\rho_n^2)
 =\mathbb E\operatorname{Tr}(\rho_\Phi\widehat\rho_n).
 \label{eq:finite_local_overlap_purity_identity}
\end{equation}
Since $\rho_\Phi$ is pure,
\begin{equation}
 \mathbb E\|\rho_\Phi-\widehat\rho_n\|_2^2
 =1-\mathcal P_{n,d}^{\rm loc}.
 \label{eq:hs_risk_equals_one_minus_purity}
\end{equation}
This posterior-overlap quantity is distinct from the fidelity of a reported
ray.

\subsection{Empirical linear inversion and its exact variance}
Define the empirical linear-inversion estimator
\begin{equation}
 \overline L_n
 :=\frac1n\sum_{i=1}^n\bigl((d+1)\rho_{U_i}-I\bigr).
 \label{eq:linear_inversion_estimator}
\end{equation}
It is used only as a matrix-valued competitor in
\eqref{eq:hs_pythagorean_posterior}; it need not be positive.  The estimator is
covariant under simultaneous unitary rotations.  To verify unbiasedness and
its variance, let $U$ have one-copy density
$d\operatorname{Tr}(\rho_\phi\rho_u)$ with respect to normalized Haar measure.
The second Haar moment
\[
 \int \rho_u^{\otimes2}\,du=\frac{I+S}{d(d+1)}
\]
implies
\begin{equation}
 \mathbb E_\phi\rho_U=\frac{I+\rho_\phi}{d+1},
 \qquad
 \mathbb E_\phi\bigl((d+1)\rho_U-I\bigr)=\rho_\phi.
 \label{eq:linear_inversion_unbiasedness}
\end{equation}
Put $L(U)=(d+1)\rho_U-I$.  Since $\rho_U^2=\rho_U$,
\[
 L(U)^2=(d^2-1)\rho_U+I,
 \qquad
 \operatorname{Tr}L(U)^2=d^2+d-1.
\]
Therefore, using $\operatorname{Tr}(\rho_\phi^2)=1$ and independence,
\begin{equation}
\begin{split}
 \mathbb E_\phi
 \bigl\|\overline L_n-\rho_\phi\bigr\|_2^2
 &=\frac{1}{n}\,
   \mathbb E_\phi
   \bigl\|L(U)-\rho_\phi\bigr\|_2^2\\
 &=\frac{d^2+d-2}{n}.
\end{split}
\label{eq:linear_inversion_exact_hs_risk}
\end{equation}
Thus the exact numerator is $d^2+d-2$.

Combining \eqref{eq:hs_pythagorean_posterior},
\eqref{eq:hs_risk_equals_one_minus_purity}, and
\eqref{eq:linear_inversion_exact_hs_risk} yields the finite-sample result.
\begin{theorem}[Finite-sample local posterior-overlap bound]
\label{thm:finite_sample_local_overlap}
For every $n\ge1$ and $d\ge2$,
\begin{equation}
 \mathcal P_{n,d}^{\rm loc}
 \ge 1-\frac{d^2+d-2}{n}.
 \label{eq:finite_sample_local_overlap_bound}
\end{equation}
When the displayed right-hand side is negative the inequality is, of course,
weaker than the trivial bound $\mathcal P_{n,d}^{\rm loc}\ge0$.
\end{theorem}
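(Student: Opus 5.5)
The plan is to combine the Hilbert--Schmidt Pythagorean identity \eqref{eq:hs_pythagorean_posterior} with the exact variance \eqref{eq:linear_inversion_exact_hs_risk} of the linear-inversion competitor, using $T=\overline L_n$. The first step checks that $\overline L_n$ is an admissible competitor. It is a Hermitian matrix-valued function of $\vec U$ alone, hence measurable with respect to the observed data. It is bounded, with $\|L(U)\|_2^2=d^2+d-1$ for every outcome, so it is square-integrable. The posterior mean $\widehat\rho_n$ is likewise bounded, being a density matrix, so every term in \eqref{eq:hs_pythagorean_posterior} is finite. That identity is the orthogonality of $\rho_\Phi-\widehat\rho_n$ to all square-integrable functions of $\vec U$, in the Hilbert space of matrix-valued random variables with inner product $\mathbb E\Tr(A^\dagger B)$. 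Dropping the nonnegative term $\mathbb E\|\widehat\rho_n-\overline L_n\|_2^2$ then gives
\[
 \mathbb E\|\rho_\Phi-\widehat\rho_n\|_2^2\le \mathbb E\|\rho_\Phi-\overline L_n\|_2^2 .
\]

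The second step evaluates the right-hand side as a Bayes risk. Condition on $\Phi=\phi$. Given $\phi$, the outcomes $U_1,\dots,U_n$ are i.i.d.\ with one-copy density $d\Tr(\rho_\phi\rho_u)$, by \eqref{eq:local_product_haar_likelihood_density}. Then \eqref{eq:linear_inversion_exact_hs_risk} gives the frequentist risk $(d^2+d-2)/n$ exactly, for every $\phi$. I would spot-check the constant quickly. By unbiasedness \eqref{eq:linear_inversion_unbiasedness} and independence, the cross terms vanish, leaving $\frac1n(\mathbb E_\phi\Tr L(U)^2-\Tr\rho_\phi^2)=\frac1n(d^2+d-1-1)$. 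Since this risk does not depend on $\phi$, integrating over the Haar prior leaves it unchanged.

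The third step uses \eqref{eq:hs_risk_equals_one_minus_purity} to rewrite the left-hand side as $1-\mathcal P_{n,d}^{\rm loc}$, which yields $1-\mathcal P_{n,d}^{\rm loc}\le (d^2+d-2)/n$ and hence the theorem. For completeness I would also justify \eqref{eq:finite_local_overlap_purity_identity}. The tower property gives $\mathbb E\Tr(\rho_\Phi\widehat\rho_n)=\mathbb E\Tr(\mathbb E[\rho_\Phi\mid\vec U]\widehat\rho_n)=\mathbb E\Tr\widehat\rho_n^2$. Combined with $\Tr\rho_\Phi^2=1$, this gives $\mathbb E\|\rho_\Phi-\widehat\rho_n\|_2^2=1-2\mathcal P+\mathcal P=1-\mathcal P$. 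The final remark in the statement, about a negative right-hand side, needs no argument, since $\mathcal P\ge0$ holds trivially because $\Tr\widehat\rho_n^2\ge0$.

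There is no serious obstacle here. The only point needing care is the measure-theoretic legitimacy of the Pythagorean identity for the conditional expectation of a matrix-valued random variable. I would handle it entrywise: each matrix entry of $\rho_\Phi$ is a bounded scalar, the conditional expectation is the $L^2$ projection entrywise, and summing the squared moduli over entries recovers the Hilbert--Schmidt norm.
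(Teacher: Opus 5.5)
Your proposal is correct and matches the paper's proof: take $T=\overline L_n$ in the Hilbert--Schmidt Pythagorean identity \eqref{eq:hs_pythagorean_posterior} and drop the nonnegative term. The exact risk \eqref{eq:linear_inversion_exact_hs_risk} is the constant $(d^2+d-2)/n$, so it is unchanged by Haar averaging, and \eqref{eq:hs_risk_equals_one_minus_purity} rewrites the left-hand side as $1-\mathcal P_{n,d}^{\rm loc}$. Your extra justifications (boundedness and square-integrability, the tower-property proof of the overlap--purity identity, and the entrywise reading of the $L^2$ projection) make explicit what the paper leaves implicit.
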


\subsection{Explicit relative-entropy bound}
The exact posterior satisfies the pointwise spectral floor
$\widehat\rho_n\ge I/(n+d)$.  Set $a=1/(n+d)$.  For every
$x\in[a,1]$, the chord bound for the convex function $-\log x$ gives
\[
 -\log x\le\frac{1-x}{1-a}\log\frac1a.
\]
Multiplying by $x$, summing over the eigenvalues of $\widehat\rho_n$, and
averaging yields
\begin{equation}
 R_{n,d}^{\rm loc}:=\mathbb E S(\widehat\rho_n)
 \le\frac{1-\mathcal P_{n,d}^{\rm loc}}{1-1/(n+d)}\log(n+d).
 \label{eq:entropy_from_purity_and_floor}
\end{equation}
Hence Theorem~\ref{thm:finite_sample_local_overlap} gives
\begin{theorem}[Finite-sample relative-entropy risk]
\label{thm:finite_sample_relative_entropy}
For every $n\ge1$ and $d\ge2$,
\begin{equation}
\begin{split}
 R_{n,d}^{\rm loc}
 \le \frac{d^2+d-2}{n}\,
       \frac{n+d}{n+d-1}
\log(n+d).
\end{split}
\label{eq:finite_sample_relative_entropy_bound}
\end{equation}
\end{theorem}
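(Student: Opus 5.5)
The bound follows by substituting Theorem~\ref{thm:finite_sample_local_overlap} into the entropy--purity inequality \eqref{eq:entropy_from_purity_and_floor}. So the plan has two parts: first verify \eqref{eq:entropy_from_purity_and_floor} carefully, then do the substitution.

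\textbf{Step 1: verify \eqref{eq:entropy_from_purity_and_floor}.}

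\emph{Spectral floor.} By \eqref{eq:local_posterior_predictive_state} and Lemma~\ref{lem:marginal_psd}, every eigenvalue $\lambda_i$ of $\widehat\rho_n(\vec u)$ lies in $[a,1]$ with $a=1/(n+d)$. The upper bound $\lambda_i\le 1$ holds because the state has unit trace.

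\emph{Chord bound.} The function $f(x)=-\log x$ is convex. On $[a,1]$ it therefore lies below its chord through $(a,\log(1/a))$ and $(1,0)$. That chord is $\frac{1-x}{1-a}\log\frac1a$.

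\emph{Summing over eigenvalues.} Multiply by $\lambda_i\ge0$ and sum:
\[
S(\widehat\rho_n)=\sum_i\lambda_i(-\log\lambda_i)\le\frac{\log(1/a)}{1-a}\sum_i\lambda_i(1-\lambda_i).
\]
Since $\sum_i\lambda_i(1-\lambda_i)=1-\Tr\widehat\rho_n^2$, this gives
\[
S(\widehat\rho_n)\le\frac{\log(n+d)}{1-a}\bigl(1-\Tr\widehat\rho_n^2\bigr).
\]

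\emph{Averaging.} Take expectation over $\vec U$ and use the definition \eqref{eq:finite_local_overlap_purity_identity} of $\mathcal P_{n,d}^{\rm loc}$. Note that $R_{n,d}^{\rm loc}$ is the Bayes relative-entropy risk, and it equals $\mathbb E S(\widehat\rho_n)$ by the fixed-observation identity of Section~\ref{sec:bayes_tools_p2}. That identity applies because $\widehat\rho_n$ is full rank, so the support condition holds.

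\textbf{Step 2: substitute the purity bound.}

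Theorem~\ref{thm:finite_sample_local_overlap} gives $1-\mathcal P_{n,d}^{\rm loc}\le (d^2+d-2)/n$. The prefactor $\log(n+d)/(1-a)$ is nonnegative, so the inequality is preserved when substituted. Finally, $1/(1-1/(n+d))=(n+d)/(n+d-1)$, which yields \eqref{eq:finite_sample_relative_entropy_bound}.

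If $(d^2+d-2)/n>1$, the bound is still valid. In that case the purity bound is the weaker of two true statements, and the substitution remains legitimate.

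\textbf{Main obstacle.} The only delicate point is the justification that $R_{n,d}^{\rm loc}$, defined as a relative-entropy risk, coincides with the averaged posterior entropy. This needs the Bayes decomposition together with the full-rank property of $\widehat\rho_n$. Everything else is elementary convexity.
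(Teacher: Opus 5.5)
Your proposal is correct and follows the paper's own argument: the spectral floor $\widehat\rho_n\ge I/(n+d)$, then the chord bound for $-\log x$ on $[1/(n+d),1]$ weighted by the eigenvalues to obtain \eqref{eq:entropy_from_purity_and_floor}, then substitution of Theorem~\ref{thm:finite_sample_local_overlap}. The step you flag as the main obstacle is actually definitional in the paper, which sets $R_{n,d}^{\rm loc}:=\mathbb E S(\widehat\rho_n)$ directly; your identification of this quantity with the Bayes relative-entropy risk, using the full-rank posterior, is consistent with Section~\ref{sec:bayes_tools_p2}.
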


\subsection{Finite-sample count bound}
The preceding theorem immediately gives a checkable sufficient condition.  The
following convenient closed form is not optimized.
\begin{corollary}[Explicit finite-sample sufficient condition]
\label{cor:finite_sample_complexity}
Let $d\ge2$ and $0<\epsilon\le1$, and set $C_d=d^2+d-2$.  If
\begin{equation}
 n\ge \frac{4C_d}{\epsilon}
 \log\!\left(\frac{4C_d(d+1)}{\epsilon}\right).
\label{eq:explicit_finite_sample_condition}
\end{equation}
then $R_{n,d}^{\rm loc}\le\epsilon$.  In particular, this gives the rigorous
finite-sample scaling
\begin{equation}
 n=O\!\left(\frac{d^2}{\epsilon}
 \log\frac{d}{\epsilon}\right).
 \label{eq:quadratic_d_sample_complexity}
\end{equation}
\end{corollary}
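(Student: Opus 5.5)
The plan is to start from Theorem~\ref{thm:finite_sample_relative_entropy} and reduce the claim to an elementary one-variable inequality in $n$. First I will simplify the two $d$-dependent factors in \eqref{eq:finite_sample_relative_entropy_bound}. Since $n\ge1$ and $d\ge2$, we have $n+d-1\ge d$, so
\[
\frac{n+d}{n+d-1}=1+\frac{1}{n+d-1}\le 1+\frac1d\le\frac32 .
\]
Since $d\le nd$, we also have $n+d\le n(d+1)$. Writing $c=d+1$, Theorem~\ref{thm:finite_sample_relative_entropy} then gives
\[
R_{n,d}^{\rm loc}\le \frac32\,C_d\,\frac{\log(cn)}{n}.
\]

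Next I will use monotonicity. The function $x\mapsto \log(cx)/x$ has derivative $(1-\log(cx))/x^2$, so it is nonincreasing for $cx\ge e$. Put
\[
L=\log\!\bigl(4C_d(d+1)/\epsilon\bigr),\qquad n_0=4C_dL/\epsilon .
\]
Since $C_d\ge4$ and $\epsilon\le1$, we get $L\ge\log 48>1$. Hence
\[
cn_0=\frac{4C_d(d+1)}{\epsilon}\,L=e^{L}L\ge e,
\]
and every $n\ge n_0$ satisfies $cn\ge e$. By monotonicity it therefore suffices to bound the right-hand side at $n=n_0$.

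At $n_0$ the key identity is $\log(cn_0)=L+\log L$. Because $\log L\le L$, this gives $\log(cn_0)\le 2L$. Therefore
\[
C_d\,\frac{\log(cn_0)}{n_0}\le C_d\cdot\frac{\epsilon}{4C_dL}\cdot 2L=\frac{\epsilon}{2},
\]
and so $R_{n,d}^{\rm loc}\le\frac32\cdot\frac\epsilon2=\frac34\epsilon\le\epsilon$ for all $n\ge n_0$.

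The scaling statement \eqref{eq:quadratic_d_sample_complexity} then follows by reading off the condition. We have $C_d=\Theta(d^2)$, and $\log(4C_d(d+1)/\epsilon)=O(\log(d/\epsilon))$ because $4C_d(d+1)\le 4(d+1)^3$. The only point needing care is the monotonicity step: one must confirm that the threshold $n_0$ already lies in the region $cn\ge e$ where $\log(cx)/x$ decreases. This is exactly why $L$ is built with the factor $(d+1)$ and why $L>1$ is checked. Everything else is routine bookkeeping of constants.
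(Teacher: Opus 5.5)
Your proof is correct and follows essentially the same strategy as the paper. You bound $(n+d)/(n+d-1)$ by a constant ($3/2$ for you, $2$ in the paper), verify the resulting one-variable inequality at the threshold $n_0$, and extend it to all $n\ge n_0$ by monotonicity: of $\log((d+1)x)/x$ in your case, and of $x/A-\log(x+d)$ with $A=2C_d/\epsilon$ in the paper's.

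Your threshold check, via $n+d\le(d+1)n$ and the exact identity $\log((d+1)n_0)=L+\log L\le 2L$, is tidier than the paper's. The paper's stated auxiliary inequality $2A(d+1)^2\ge 2A\log(2A(d+1))+d$ fails for very small $\epsilon$; for example, at $d=2$, $\epsilon=10^{-3}$ the left side is $144000$ and the right side is about $172466$. The inequality actually needed there, $4A^2(d+1)^2\ge n_0+d$, still holds because $\log y\le y$.
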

\begin{proof}
Since $(n+d)/(n+d-1)\le2$, it is enough that
$2C_d\log(n+d)/n\le\epsilon$.  Put
$A=2C_d/\epsilon$.  The right-hand side of
\eqref{eq:explicit_finite_sample_condition} is $2A\log(2A(d+1))$.
Let $n_0=2A\log(2A(d+1))$.  Since $A\ge4$ and $d\ge2$, we have
$n_0\ge A$.  The function
$x\mapsto x/A-\log(x+d)$ is increasing for $x\ge A$, because its derivative
is $A^{-1}-(x+d)^{-1}>0$.  Moreover,
\begin{align*}
 \frac{n_0}{A}=2\log(2A(d+1))
& \ge \log\bigl(2A\log(2A(d+1))+d\bigr)\notag\\
 &=\log(n_0+d),
\end{align*}
where the inequality follows from
$2A(d+1)^2\ge 2A\log(2A(d+1))+d$, valid for $A\ge4$ and $d\ge2$.
Hence $\log(n+d)\le n/A$ for every $n\ge n_0$, and the desired inequality
follows.
\end{proof}

The quadratic dependence on $d$ comes from comparing the posterior mean with
the simple linear-inversion estimator.  It is a uniform finite-sample
guarantee.  By contrast, the randomized-mesh argument below is Fisher efficient
at each fixed mesh and has coefficient $a_k\to d-1$ only after the iterated
large-sample and mesh-refinement limits.  It yields the sharper fixed-dimensional
linear-$d$ scale $(d/\epsilon)\log(d/\epsilon)$ asymptotically, but not a uniform
linear-$d$ nonasymptotic theorem.  The two statements therefore concern
different strengths of guarantee and are not contradictory.

\section{Recovering Finite Meshes and the Shared-Orientation Experiment}
\label{sec:rmesh_main}
We first explain why the following finite-mesh regularization is needed.  The
continuous covariant Haar measurement has exactly the desired Fisher
information, namely $2g_{\rm FS}$, as discussed in
\cite[Sec.~5]{Hayashi1998PureStateEstimation}.  It is nevertheless not a
regular model to which the standard MLE efficiency theory can be applied
directly.  Although the density
$d\operatorname{Tr}(\rho_\theta\rho_v)$ has full measure-theoretic support,
it vanishes on the parameter-dependent projective hyperplane orthogonal to
$\theta$.  Consequently, the density is not uniformly bounded away from zero,
and the log likelihood and its derivatives can be singular near that zero set.
Thus the usual common-positive-density regularity conditions for the smooth
log-likelihood treatment are unavailable; compare
\cite[Secs.~2.1 and 4.3--4.4]{AmariNagaoka2000InformationGeometry}.  The finite
mesh replaces the continuous outcome by finitely many positive-probability
cells.  At each sufficiently fine fixed mesh it therefore restores a common
positive support and a regular finite-alphabet likelihood, while refinement
recovers the Fisher information of the original continuous covariant
measurement.

The randomized-mesh construction is an auxiliary comparison device, not a
replacement for the original product-Haar prediction problem.  Its role is to
produce a covariant ray-valued coarse observation whose risk can be analyzed by
ordinary finite-alphabet MLE theory.  The resulting coarse posterior gives an
upper bound on the entropy, and a lower bound on the purity, of the posterior
conditioned on the complete product-Haar outcomes.  The opposite bounds come
from the exact collective benchmarks.  Thus the argument has three distinct
levels: the fine continuous observation, the shared-orientation mesh
experiment, and the reported-ray coarse observation.

The randomized-mesh construction is used only to obtain the upper sides of the
sharp asymptotic comparisons.  Its quantifiers are therefore organized as
follows.  We first fix a sufficiently fine finite mesh and let $n\to\infty$ in
the resulting regular finite-alphabet model.  Only after this fixed-mesh limit
has been taken do we refine the mesh.  No triangular choice $k=k_n$ is used.

The only mesh properties needed below are full support, global identifiability,
and uniform recovery of the fine Fisher information.  Stating these properties
at the mesh level separates the approximation argument from the standard
fixed-mesh MLE argument.
Let $\Theta=\mathbb P(\mathbb C^d)$, with invariant probability measure $\nu$
and Fubini--Study metric $g_{\rm FS}$.  Throughout the mesh construction, we allow a partition to leave out a set
of invariant measure zero.  Thus its cells are pairwise disjoint and their
union agrees with $\Theta$ except on a set $N$ satisfying $\nu(N)=0$.  Since
every model distribution $P_\theta$ has a density with respect to $\nu$, we
also have $P_\theta(N)=0$ for every $\theta$.  The cell label may therefore be
defined arbitrarily on $N$, without changing any cell probability, score,
conditional expectation, or Fisher information.  Choose nested finite Borel
cell families in this sense,
\[
 \mathcal C_k=\{C_{k,1},\ldots,C_{k,m_k}\}
\]
with cells of positive $\nu$-measure whose boundaries have $\nu$-measure zero.  Put
\begin{equation}
 A_{k,j}:=\int_{C_{k,j}}\rho_v\,\nu(dv),
 \qquad
 q_{\theta,k}(j):=d\,\operatorname{Tr}(\rho_\theta A_{k,j}).
 \label{eq:recovering_mesh_model}
\end{equation}
We impose the following conditions.

\begin{definition}[Recovering mesh sequence]
\label{def:recovering_mesh}
A nested sequence $(\mathcal C_k)_k$ of finite Borel cell families, in the
above measure-zero convention, is called recovering if:
\begin{enumerate}
\item the cells are pairwise disjoint, cover $\Theta$ up to a set of
$\nu$-measure zero, and each cell has positive $\nu$-measure and boundary of
$\nu$-measure zero;
\item the maximal Fubini--Study diameter tends to zero;
\item for every sufficiently large $k$, the map
$\theta\mapsto q_{\theta,k}$ is globally identifiable; and
\item writing $I_{\theta,k}$ for the Fisher information of
\eqref{eq:recovering_mesh_model},
\begin{equation}
 \eta_k:=
 \sup_{\substack{\theta\in\Theta,\ v\in T_\theta\Theta\\
 g_{{\rm FS},\theta}[v,v]=1}}
 \left|I_{\theta,k}[v,v]-2\right|
 \longrightarrow0.
 \label{eq:uniform_fisher_recovery}
\end{equation}
\end{enumerate}
\end{definition}

Global identifiability in item 3 is guaranteed when the cell operators
$\{A_{k,j}\}_j$ span the Hermitian matrices.  It is enough to establish this
spanning condition at one mesh, because every parent-cell operator is the sum
of its child-cell operators.

For clarity, Appendix~\ref{app:informationally_complete_meshes} proves the
existence of recovering sequences in two separate stages.  First, it constructs
an informationally complete finite partition and then refines it through nested
partitions of vanishing diameter whose cells have positive measure and boundaries of measure zero; the spanning
property, and hence global identifiability, persist under refinement.  Second,
it writes each directional mesh score as the conditional expectation of the
fine score.  Martingale convergence gives pointwise Fisher recovery, while
monotonicity and Dini's theorem upgrade this to uniform recovery on the unit
tangent bundle.  Proposition~\ref{prop:recovering_meshes_exist} records the
result.

Condition \eqref{eq:uniform_fisher_recovery} is equivalently the coordinate-free
quadratic-form bound
\begin{equation}
 (2-\eta_k)g_{{\rm FS},\theta}
 \le I_{\theta,k}
 \le(2+\eta_k)g_{{\rm FS},\theta}.
 \label{eq:fisher_recovery_form_bound}
\end{equation}
Consequently, there is $k_0$ such that every $k\ge k_0$ has uniformly
nondegenerate Fisher information.  Thus Fisher nondegeneracy need not be
proved separately from informational completeness.

The remaining argument uses the mesh through four separate implications:
\begin{equation}
 \begin{array}{c}
 \text{uniform Fisher recovery}\\
 \Downarrow\\
 \text{fixed-mesh MLE efficiency}\\
 \Downarrow\\
 \text{covariant reported-ray risk}\\
 \Downarrow\\
 \text{coarse-posterior entropy and purity bounds}.
 \end{array}
 \label{eq:mesh_bridge_map}
\end{equation}
The first implication is treated in Section~\ref{sec:fixed_mesh_mle_efficiency};
the last two are treated in
Section~\ref{sec:posterior_comparison_asymptotics}.  This display is only a map
of the proof: no additional limit or statistical assumption is introduced.

For a fixed $k\ge k_0$, draw one Haar unitary $G$ before observing the sample
and retain it for all copies.  If the fine outcomes are $U_1,\ldots,U_n$, report
\begin{equation}
 Y_{n,k}=(G,J_1,\ldots,J_n),
 \qquad G^{-1}U_i\in C_{k,J_i}.
 \label{eq:shared_orientation_observation}
\end{equation}
Conditionally on $G=g$, the reference parameter is $\theta=g^{-1}\psi$ and the
labels are i.i.d. with law $q_{\theta,k}$.  Let
$\widehat\theta_{n,k}$ be one fixed measurable global-MLE selector for this
finite-alphabet model and define the reported physical ray by
\begin{equation}
 \widehat\Psi_{n,k}:=G\widehat\theta_{n,k}.
 \label{eq:shared_orientation_reported_ray}
\end{equation}
Finite-sample uniqueness is not assumed.  The rule is exactly covariant because
$(G,J^n)\mapsto(HG,J^n)$ under a physical rotation $H$.

\section{Fixed-Mesh Efficiency and Haar-Averaged Risk}
\label{sec:fixed_mesh_mle_efficiency}
Fix $k\ge k_0$.  Each probability in \eqref{eq:recovering_mesh_model} is
strictly positive: the zero set of
$v\mapsto\operatorname{Tr}(\rho_\theta\rho_v)$ is a of Haar measure zero projective
hyperplane and cannot contain a positive-measure cell.  Continuity, compactness,
and the finite alphabet therefore give
\begin{equation}
 c_k:=\min_{\theta,j}q_{\theta,k}(j)>0.
 \label{eq:fixed_mesh_uniform_support}
\end{equation}
The three inputs to fixed-mesh regularity are now explicit.  Positive cell
probabilities and compactness give the uniform support bound
\eqref{eq:fixed_mesh_uniform_support}; the spanning construction in
Appendix~\ref{app:informationally_complete_meshes} gives global
identifiability; and uniform Fisher recovery gives the nondegeneracy bound
\eqref{eq:fisher_recovery_form_bound}.  Together they make the fixed-$k$ model
a compact, smooth, identifiable, full-support regular finite-alphabet model.

Define
\begin{equation}
 a_k(\theta):=
 \operatorname{Tr}\!\left(g_{{\rm FS},\theta}I_{\theta,k}^{-1}\right),
 \qquad
 \overline a_k:=\int_\Theta a_k(\theta)\,\nu(d\theta).
 \label{eq:averaged_fixed_mesh_coefficient}
\end{equation}

\begin{theorem}[Fixed-mesh MLE efficiency in the form needed here]
\label{thm:fixed_mesh_selected_mle}
For every fixed $k\ge k_0$ and every $\theta\in\Theta$,
\begin{equation}
 \sqrt n\,\operatorname{Log}_\theta\widehat\theta_{n,k}
 \Longrightarrow N(0,I_{\theta,k}^{-1}),
 \label{eq:selected_mle_asymptotic_normality}
\end{equation}
and
\begin{equation}
 n\,\mathbb E_\theta
 \bigl[1-F(\widehat\theta_{n,k},\theta)\bigr]
 \longrightarrow a_k(\theta).
 \label{eq:fixed_mesh_pointwise_risk}
\end{equation}
Moreover, for some finite constant $C_k$,
\begin{equation}
 \sup_{\theta\in\Theta}
 \mathbb E_\theta
 \bigl[1-F(\widehat\theta_{n,k},\theta)\bigr]
 \le \frac{C_k}{n}
 \label{eq:fixed_mesh_risk_domination}
\end{equation}
for all sufficiently large $n$.  Consequently,
\begin{equation}
 n\int_\Theta
 \mathbb E_\theta\bigl[1-F(\widehat\theta_{n,k},\theta)\bigr]
 \,\nu(d\theta)
 \longrightarrow\overline a_k.
 \label{eq:fixed_mesh_haar_averaged_risk}
\end{equation}
\end{theorem}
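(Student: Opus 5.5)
The plan follows the classical route for regular finite-parameter models, applied to the fixed-$k$ model $\{q_{\theta,k}\}_{\theta\in\Theta}$ on the finite alphabet $\{1,\ldots,m_k\}$. The compact manifold $\Theta$ is covered by finitely many smooth charts. In each chart $\theta\mapsto\log q_{\theta,k}(j)$ is real-analytic, since $q_{\theta,k}(j)=d\operatorname{Tr}(\rho_\theta A_{k,j})$ and this is bounded below by $c_k>0$ by \eqref{eq:fixed_mesh_uniform_support}. Hence all derivatives of the log-likelihood up to order three are bounded uniformly in $\theta$ and $j$. The argument then has three steps: consistency, asymptotic normality, and then risk convergence together with uniform domination.

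\textbf{Consistency.} The average log-likelihood $\ell_n(\theta')=n^{-1}\sum_i\log q_{\theta',k}(J_i)$ is a continuous function of the empirical distribution $\hat q_n$ of the labels, uniformly in $\theta'$. This holds because $\log q_{\theta',k}$ is bounded by $|\log c_k|$. Its limit is $\sum_j q_{\theta,k}(j)\log q_{\theta',k}(j)$. By global identifiability (item~3 of Definition~\ref{def:recovering_mesh}) and Gibbs' inequality, this limit is uniquely maximized at $\theta'=\theta$. Compactness of $\Theta$ and the argmax theorem then give $\widehat\theta_{n,k}\to\theta$ in probability for any measurable selector. Uniqueness of the maximizer is not needed, because every maximizer lies in a shrinking neighbourhood of $\theta$. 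For the domination step I will make this quantitative. The quantity
\[
\kappa(\delta):=\inf_\theta\inf_{d_{\rm FS}(\theta',\theta)\ge\delta}D(q_{\theta,k}\Vert q_{\theta',k})
\]
is positive by continuity and compactness. Since $\ell_n$ is Lipschitz in $\hat q_n$, Hoeffding's inequality for the multinomial gives $P_\theta(d_{\rm FS}(\widehat\theta_{n,k},\theta)\ge\delta)\le m_k e^{-c(\delta)n}$ uniformly in $\theta$.

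\textbf{Normality and risk.} On the event $\{d(\widehat\theta,\theta)<\delta\}$, take $\delta$ smaller than the injectivity radius and work in normal coordinates $\operatorname{Log}_\theta$. Expanding the score equation to second order with a third-order remainder gives
\[
\sqrt n\operatorname{Log}_\theta\widehat\theta=I_{\theta,k}^{-1}\,n^{-1/2}\sum_i s_\theta(J_i)+o_P(1).
\]
The nondegeneracy bound \eqref{eq:fisher_recovery_form_bound} gives $I_{\theta,k}\ge(2-\eta_k)g_{\rm FS}$, and this provides the uniform invertibility needed here. The multinomial CLT then yields \eqref{eq:selected_mle_asymptotic_normality}. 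Next, infidelity satisfies $1-F(\theta',\theta)=\sin^2 d_{\rm FS}(\theta',\theta)=g_{{\rm FS},\theta}[\operatorname{Log}_\theta\theta',\operatorname{Log}_\theta\theta']+O(d^4)$. By the continuous mapping theorem, $n(1-F)\Rightarrow g_{\rm FS}[Z,Z]$ with $Z\sim N(0,I_{\theta,k}^{-1})$, whose mean is $a_k(\theta)$. Upgrading this to convergence of expectations, \eqref{eq:fixed_mesh_pointwise_risk}, requires uniform integrability of $n\,d_{\rm FS}(\widehat\theta,\theta)^2$. I expect this to be the main obstacle.

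\textbf{Uniform integrability and domination.} I will first establish a uniform bound $\sup_\theta\mathbb E_\theta[n\,d(\widehat\theta,\theta)^2\mathbf 1\{d<\delta\}]\le C$, and in fact a uniform fourth-moment version with $n^2d^4$. On the good event, the Taylor expansion together with the uniform lower bound on the Hessian of $-\ell_n$ (valid with exponentially small failure probability) gives $|\operatorname{Log}_\theta\widehat\theta|\le C'\,|n^{-1}\sum_i s_\theta(J_i)|$. The scores are bounded by a constant independent of $\theta$, since $c_k>0$ and the derivatives are bounded, so the moments of the score average are $O(n^{-1})$ and $O(n^{-2})$ uniformly. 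The bad event contributes at most $m_ke^{-c(\delta)n}\le 1/n$ for large $n$, because $1-F\le1$. Together these give \eqref{eq:fixed_mesh_risk_domination} with a constant $C_k$. The bounded fourth moment gives uniform integrability and hence \eqref{eq:fixed_mesh_pointwise_risk}. Finally, $n\,\mathbb E_\theta[1-F]\le C_k$ for all $\theta$ and converges pointwise to $a_k(\theta)$, and $a_k$ is continuous. Dominated convergence with respect to $\nu$ then yields \eqref{eq:fixed_mesh_haar_averaged_risk}.
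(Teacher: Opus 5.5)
Your proposal is correct and follows essentially the same route as the paper. Compact identifiability gives uniform Kullback--Leibler separation, and finite-alphabet concentration then localizes every selected MLE with probability $1-Ce^{-cn}$ uniformly in $\theta$. A score expansion, with bounded scores and the uniform inverse-Fisher bound from \eqref{eq:fisher_recovery_form_bound}, gives the $O_k(n^{-1})$ domination, and dominated convergence handles the Haar average. You also make explicit the uniform-integrability step for \eqref{eq:fixed_mesh_pointwise_risk}, via the fourth-moment bound on $n\,d_{\rm FS}^2$, which the paper folds into ``the standard quadratic-risk expansion.''
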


\begin{proof}
The preceding regularity statement places the fixed-$k$ model directly within
ordinary finite-alphabet MLE theory.  It gives consistency and
\eqref{eq:selected_mle_asymptotic_normality}.  In Fubini--Study normal
coordinates,
\[
 1-F(\operatorname{Exp}_\theta v,\theta)
 =g_{{\rm FS},\theta}[v,v]+O(\|v\|^4).
\]
The standard quadratic-risk expansion therefore gives
\eqref{eq:fixed_mesh_pointwise_risk}.

For completeness, only a domination estimate, not a parameter-uniform risk
expansion, is needed.  Compact identifiability gives a uniform
Kullback--Leibler separation outside any fixed neighborhood of the true
parameter.  Finite-alphabet concentration then localizes every selected global
MLE to that neighborhood with probability at least $1-Ce^{-cn}$, uniformly in
$\theta$.  Within the neighborhood, the score expansion, bounded scores from
\eqref{eq:fixed_mesh_uniform_support}, and the uniform inverse-Fisher bound from
\eqref{eq:fisher_recovery_form_bound} give mean squared geodesic error
$O_k(n^{-1})$.  Since infidelity is locally bounded by a constant times squared
Fubini--Study distance and is globally at most one,
\eqref{eq:fixed_mesh_risk_domination} follows.  Dominated convergence now gives
\eqref{eq:fixed_mesh_haar_averaged_risk}.
\end{proof}

The shared orientation converts this Haar-averaged reference risk into a
pointwise physical risk.  For every fixed physical ray $\psi$, the reference
parameter $G^{-1}\psi$ is Haar distributed and unitary invariance gives
\[
 F(\widehat\Psi_{n,k},\psi)
 =F(\widehat\theta_{n,k},G^{-1}\psi).
\]
Hence \eqref{eq:fixed_mesh_haar_averaged_risk} implies
\begin{equation}
 n\,\mathbb E_\psi
 \bigl[1-F(\widehat\Psi_{n,k},\psi)\bigr]
 \longrightarrow\overline a_k
 \label{eq:averaged_mle_infidelity_limit}
\end{equation}
for every $\psi$.  No uniform risk expansion in the reference parameter is
required; the exact physical constancy is supplied by the shared Haar
orientation.

\section{Posterior Comparison and Sharp Asymptotics}
\label{sec:posterior_comparison_asymptotics}
\subsection{Recovery of the fine coefficient}
\label{subsec:fine_coefficient_recovery}
The uniform Fisher-recovery condition immediately controls the inverse Fisher
information.  From \eqref{eq:fisher_recovery_form_bound},
\begin{equation}
 \frac{1}{2+\eta_k}g_{{\rm FS},\theta}^{-1}
 \le I_{\theta,k}^{-1}
 \le\frac{1}{2-\eta_k}g_{{\rm FS},\theta}^{-1}.
 \label{eq:inverse_fisher_recovery_bound}
\end{equation}
Since $\dim_{\mathbb R}\Theta=2(d-1)$,
\begin{equation}
 \frac{2(d-1)}{2+\eta_k}
 \le a_k(\theta)
 \le\frac{2(d-1)}{2-\eta_k}.
 \label{eq:coefficient_recovery_bound}
\end{equation}
Thus $a_k(\theta)\to d-1$ uniformly in $\theta$, and in particular
\begin{equation}
 \overline a_k\longrightarrow d-1.
 \label{eq:averaged_coefficient_limit}
\end{equation}

\subsection{Covariant coarse-graining of the posterior}
\label{subsec:covariant_posterior_comparison}
Let $X^n$ denote the complete product-Haar outcomes and set
$Z_{n,k}:=\widehat\Psi_{n,k}$.  The information order used below is
\begin{align*}
 \text{complete Haar outcomes}
& \longrightarrow \text{shared-orientation cell labels} \\
 & \longrightarrow \text{selected MLE ray}.
\end{align*}
More precisely, the relevant coarse-graining is the Markov chain
\[
 (X^n,G)\longrightarrow (G,J_1,\ldots,J_n)
 \longrightarrow Z_{n,k}.
\]
Each arrow discards information.  Consequently the posterior entropy can only
increase and the posterior purity can only decrease.  The MLE calculation is
therefore used only to control a coarse posterior, which in turn bounds the
posterior conditioned on the complete product-Haar data.
Adding the independent variable $G$ does not change the fine posterior.  Define
the posterior conditioned only on the reported ray by
\[
 \mu_{n,k}(Z):=\mathbb E[\rho_\Psi\mid Z_{n,k}=Z].
\]
Exact covariance and stabilizer invariance imply
\begin{equation}
 \mu_{n,k}(Z)
 =r_{n,d,k}\rho_Z
 +\frac{1-r_{n,d,k}}{d-1}(I-\rho_Z),
 \label{eq:shared_mesh_depolarizing_posterior}
\end{equation}
where
\begin{equation}
 r_{n,d,k}:=\mathbb E F(Z_{n,k},\Psi)
 =1-\frac{\overline a_k}{n}+o_k(n^{-1})
 \label{eq:shared_mesh_fidelity_expansion}
\end{equation}
by \eqref{eq:averaged_mle_infidelity_limit}.  Therefore
\begin{align}
 h_d(r_{n,d,k})
 &=\overline a_k\frac{\log n}{n}
   +o_k\!\left(\frac{\log n}{n}\right),
 \label{eq:shared_mesh_entropy_endpoint}\\
 p_d(r_{n,d,k})
 &=1-\frac{2\overline a_k}{n}+o_k(n^{-1}).
 \label{eq:shared_mesh_purity_endpoint}
\end{align}
The posterior tower and the entropy/purity coarse-graining identities give
\begin{align}
 h_d\!\left(\frac{n+1}{n+d}\right)
 &\le R_{n,d}^{\rm loc}\le h_d(r_{n,d,k}),
 \label{eq:shared_mesh_entropy_sandwich}\\
 \frac{(n+1)^2+d-1}{(n+d)^2}
 &\ge\mathcal P_{n,d}^{\rm loc}\ge p_d(r_{n,d,k}).
 \label{eq:shared_mesh_purity_sandwich}
\end{align}

\subsection{Sharp asymptotics through iterated limits}
\label{subsec:iterated_limit_asymptotics}
Define
\[
 A_{n,d}:=\frac{n}{\log n}R_{n,d}^{\rm loc},
 \qquad
 B_{n,d}:=\frac n2(1-\mathcal P_{n,d}^{\rm loc}).
\]
For every fixed $k\ge k_0$, the upper sides of
\eqref{eq:shared_mesh_entropy_sandwich}--\eqref{eq:shared_mesh_purity_sandwich}
give
\[
 \limsup_{n\to\infty}A_{n,d}\le\overline a_k,
 \qquad
 \limsup_{n\to\infty}B_{n,d}\le\overline a_k.
\]
Only now is the mesh refined.  By \eqref{eq:averaged_coefficient_limit},
\[
 \limsup_{n\to\infty}A_{n,d}\le d-1,
 \qquad
 \limsup_{n\to\infty}B_{n,d}\le d-1.
\]
The exact collective endpoints in the left sides of the sandwich bounds give
the reverse liminf inequalities.  Consequently,
\begin{equation}
 R_{n,d}^{\rm loc}
 =(d-1)\frac{\log n}{n}
 +o\!\left(\frac{\log n}{n}\right),
 \label{eq:sharp_local_entropy_asymptotic}
\end{equation}
and
\begin{equation}
 \mathcal P_{n,d}^{\rm loc}
 =1-\frac{2(d-1)}{n}+o(n^{-1}).
 \label{eq:sharp_local_purity_asymptotic}
\end{equation}
The order of limits remains essential: first fix any sufficiently fine mesh and
let $n\to\infty$, and only afterward let $k\to\infty$.  No triangular schedule
is used.

\subsection{Finite-sample numerical illustration}
\label{subsec:numerical_illustration}
The preceding asymptotic conclusions are entirely analytic and do not rely on
numerical evidence.  We conclude this section with a finite-sample illustration
of the gap between the exact product-Haar posterior and the collective
benchmark.  The supplied Monte Carlo data contain $50{,}000$ samples for each displayed pair $(n,d)$ with $1\le n\le8$ and $d\in\{2,3,4,8,16\}$. They evaluate the exact permanent-based local posterior overlap and the exact collective benchmark. Because these small-$n$ computations concern the exact local posterior rather than the randomized-mesh MLE, they are used only as finite-sample illustrations and not as evidence that an asymptotic regime has been reached. Posterior overlap, posterior purity, and reported-ray fidelity are kept distinct.

\begin{figure*}[t]
 \centering
 \includegraphics[width=0.96\textwidth]{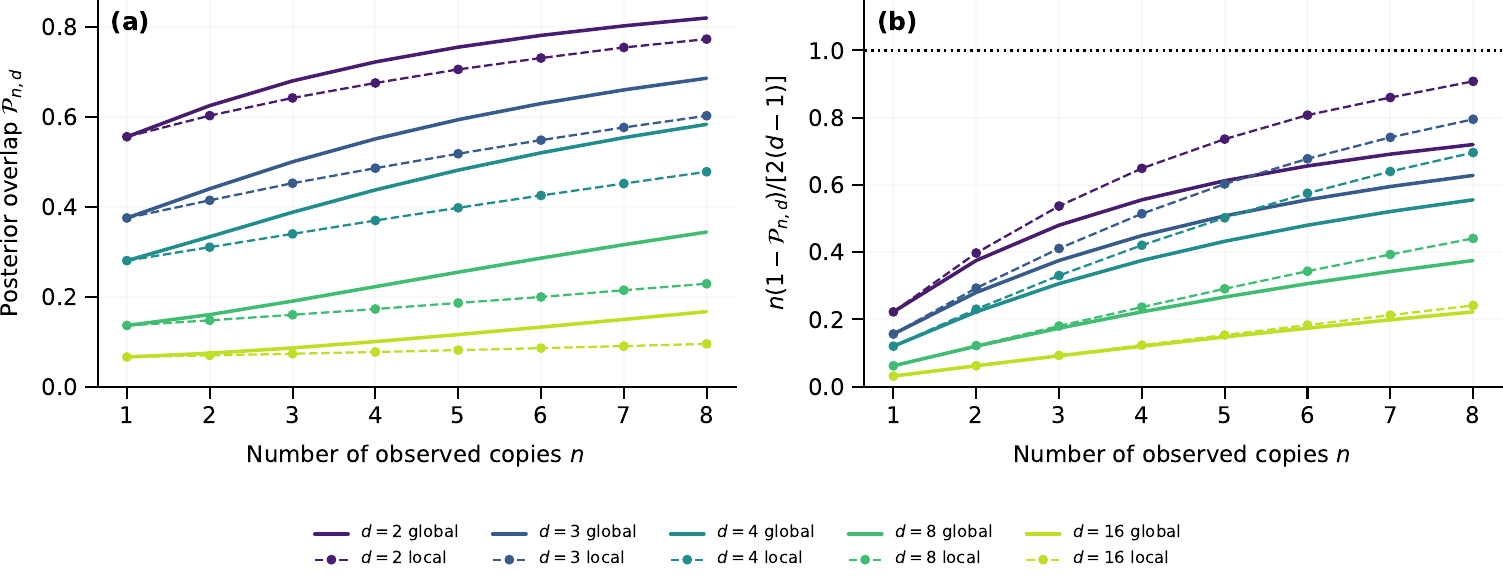}
 \caption{Finite-sample comparison of the exact global collective posterior-overlap endpoint and the Monte Carlo estimate of the local product-Haar posterior overlap. The local quantity equals the mean posterior purity $\mathcal P_{n,d}^{\rm loc}=\mathbb E\operatorname{Tr}(\widehat\rho_n^2)$ and is not a reported-ray fidelity. The displayed values use $50{,}000$ samples for each pair $(n,d)$.}
 \label{fig:global_local_posterior_overlap}
\end{figure*}

Figure~\ref{fig:global_local_posterior_overlap} displays the finite-sample gap between the product-Haar posterior and the unrestricted collective benchmark. It is consistent with the exact ordering proved above, while the rigorous finite-sample risk and sample-count guarantees remain those of Section~\ref{sec:finite_sample_risk_complexity}.

\section{Discussion}
Independent one-copy Haar measurements admit an exact full-rank posterior predictive state whose coefficients are governed by Gram-minor permanents. This solves the density-matrix decision problem after the product observation has been fixed and gives decision-rule minimaxity within that experiment. The unrestricted collective optimization remains the subject of the companion paper and enters here only through its two exact endpoints.

The randomized mesh is organized around a shared orientation.  Uniform Fisher recovery makes every sufficiently fine fixed mesh a regular finite-alphabet model, while the shared Haar orientation converts its Haar-averaged MLE risk into an exactly covariant physical risk.  Thus only the ordinary asymptotic efficiency of one selected MLE and an $O(n^{-1})$ risk domination are needed.  This avoids both a parameter-uniform risk expansion and the stronger task of controlling every element of the finite-sample argmax correspondence.

Two sample-count statements must be distinguished. The linear-inversion comparison proves a uniform finite-sample $O((d^2/\epsilon)\log(d/\epsilon))$ guarantee. The randomized-mesh argument first fixes $k$ and lets $n\to\infty$, obtaining limsup bounds with coefficient $\overline a_k$; only afterward is the mesh refined, giving $\overline a_k\to d-1$. This yields the sharper fixed-dimensional asymptotic scale $(d/\epsilon)\log(d/\epsilon)$ without a triangular schedule. The remaining directions include efficient permanent evaluation and reproducible randomized-mesh numerics.

\appendices

\section{Construction and Fisher Recovery of the Meshes}
\label{app:informationally_complete_meshes}

This appendix proves that the recovering meshes used in
Section~\ref{sec:rmesh_main} exist.  The proof has three steps.  We first build
one finite partition whose cell operators span the Hermitian matrices.  We then
refine it through nested partitions of vanishing diameter whose cells have
positive measure and boundaries of measure zero, and verify that informational
completeness is preserved.
Finally, we identify every directional mesh score as a conditional expectation
of the fine score.  This gives pointwise Fisher recovery by martingale
convergence and uniform recovery by Dini's theorem.

\begin{proposition}[Existence of recovering meshes]
\label{prop:recovering_meshes_exist}
There is a nested sequence of finite Borel cell families
$\mathcal C_k=\{C_{k,1},\ldots,C_{k,m_k}\}$ in the measure-zero convention
stated in Section~\ref{sec:rmesh_main}, on
$\Theta=\mathbb P(\mathbb C^d)$ such that:
\begin{enumerate}
\item the cells cover $\Theta$ up to a set of $\nu$-measure zero, and
every cell has positive $\nu$-measure and boundary of $\nu$-measure zero;
\item the maximal Fubini--Study diameter tends to zero;
\item for every $k$, the cell operators
\[
 A_{k,j}=\int_{C_{k,j}}\rho_v\,\nu(dv)
\]
span the real vector space of Hermitian matrices, and hence
$\theta\mapsto q_{\theta,k}$ is globally identifiable; and
\item the Fisher information satisfies
\[
 \sup_{\substack{\theta\in\Theta,\ w\in T_\theta\Theta\\
 g_{{\rm FS},\theta}[w,w]=1}}
 \left|I_{\theta,k}[w,w]-2\right|\longrightarrow0.
\]
\end{enumerate}
In particular, $(\mathcal C_k)_k$ is a recovering mesh sequence in the sense of
Definition~\ref{def:recovering_mesh}.
\end{proposition}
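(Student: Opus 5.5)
The plan follows the three stages announced at the start of this appendix: build one informationally complete partition, refine it to vanishing diameter while keeping the cell conditions, and then prove Fisher recovery by a conditional-expectation and martingale argument.

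\emph{Informationally complete seed partition.} Pick $d^2$ rays $v_1,\ldots,v_{d^2}$ whose projectors $\rho_{v_a}$ span the real space of Hermitian matrices; any informationally complete family will do. For $\varepsilon>0$ let $B_a(\varepsilon)$ be the open Fubini--Study ball of radius $\varepsilon$ about $v_a$, with $\varepsilon$ small enough that the balls are disjoint. Then $\nu(B_a(\varepsilon))^{-1}\int_{B_a(\varepsilon)}\rho_v\,\nu(dv)\to\rho_{v_a}$ as $\varepsilon\to0$. Linear independence of $d^2$ matrices is an open condition (nonvanishing determinant in a fixed basis), so for small $\varepsilon$ the operators $\int_{B_a(\varepsilon)}\rho_v\,\nu(dv)$ already span. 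Take $\mathcal C_0$ to consist of these balls together with the complement of their union. Geodesic spheres have $\nu$-measure zero, so every cell has positive measure and null boundary.

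\emph{Nested refinement.} Fix $\varepsilon_k\downarrow0$ and a finite $\varepsilon_k$-net in $\Theta$. Let $\mathcal V_k$ be the Voronoi-type partition it generates. Its cell boundaries lie in finitely many equidistance sets $\{v:\mathrm{dist}(v,x)=\mathrm{dist}(v,y)\}$; each is a real-analytic proper subset, hence $\nu$-null. Define $\mathcal C_k:=\mathcal C_{k-1}\vee\mathcal V_k$, the common refinement, and move cells of measure zero into the exceptional set $N$. Then:
\begin{itemize}
\item the family is nested by construction;
\item boundaries of finite intersections lie in unions of the original boundaries, so they stay null;
\item diameters are at most $2\varepsilon_k\to0$, giving item 2;
\item each parent operator is the sum of its children's operators, so the span of $\{A_{k,j}\}_j$ contains the span at level $0$, giving item 3.
\end{itemize}

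\emph{Fisher recovery.} Fix $\theta$ and a unit tangent vector $w$ with a curve $\theta(t)$, and let $s_w(v)=\partial_t\log\operatorname{Tr}(\rho_{\theta(t)}\rho_v)|_{t=0}$ be the fine directional score. By the normalization recalled in Section~\ref{sec:rmesh_main}, $s_w\in L^2(P_\theta)$ with $\mathbb E_\theta s_w^2=2$. Since $q_{\theta,k}(j)=\int_{C_{k,j}}p_\theta\,d\nu$, differentiating under the integral gives
\[
\partial_t\log q_{\theta(t),k}(j)=\mathbb E_\theta[s_w\mid v\in C_{k,j}].
\]
Differentiation is legitimate because $p_\theta$ is a polynomial in the coordinates and each $q_{\theta,k}(j)>0$. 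Hence $I_{\theta,k}[w,w]=\mathbb E_\theta\bigl(\mathbb E_\theta[s_w\mid\mathcal F_k]\bigr)^2$, where $\mathcal F_k$ is the $\sigma$-algebra generated by $\mathcal C_k$.

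The $\mathcal F_k$ increase with $k$, and because diameters vanish they generate the Borel $\sigma$-algebra modulo null sets. The $L^2$ martingale convergence theorem therefore gives $I_{\theta,k}[w,w]\uparrow 2$ pointwise. Monotonicity in $k$ holds because conditional expectation onto a finer $\sigma$-algebra has larger second moment.

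Each $I_{\theta,k}[w,w]$ is continuous on the compact unit tangent bundle: it is a finite sum of $(\partial q)^2/q$ with smooth numerators and denominators bounded below as in \eqref{eq:fixed_mesh_uniform_support}, applied at any fixed $k$. The limit $2$ is constant, so Dini's theorem upgrades the monotone pointwise convergence to uniform convergence. This is item 4, i.e.\ $\eta_k\to0$.

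\emph{Expected main obstacle.} The delicate points are in the last stage. First, the claim that $\bigvee_k\mathcal F_k$ is the full Borel $\sigma$-algebra modulo $\nu$-null sets needs care: the null exceptional sets $N$ must not spoil it, which holds because $P_\theta\ll\nu$. Second, the differentiation under the integral and the continuity needed for Dini both rely on the positivity of each $q_{\theta,k}(j)$, which is uniform in $\theta$ by compactness, as noted at the start of Section~\ref{sec:fixed_mesh_mle_efficiency}. I would settle these first and treat the seed construction and refinement as routine.
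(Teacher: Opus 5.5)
Your proposal is correct and follows the paper's proof essentially step for step. The shared steps are: an informationally complete seed of small balls plus a complement cell; nested refinement, with spanning preserved because parent operators are sums of child operators; the mesh score identified as a conditional expectation of the fine score; $L^2$ martingale convergence for pointwise Fisher recovery; and monotonicity plus Dini's theorem for uniform recovery.

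The one difference is cosmetic. You refine by Voronoi partitions and use that geodesic spheres and equidistance sets in $\mathbb P(\mathbb C^d)$ are $\nu$-null level sets of nonconstant real-analytic functions. The paper instead disjointizes a cover by balls whose radii avoid the at most countably many radii with boundary of positive measure.
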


\subsection{An informationally complete initial partition}

We begin with a concrete finite spanning family of rays.  Fix an orthonormal
basis $\{e_a\}_{a=1}^d$ of $\mathbb C^d$.  Consider the $d^2$ rays represented by
\[
 e_a,\qquad
 \frac{e_a+e_b}{\sqrt2},\qquad
 \frac{e_a+\mathrm i e_b}{\sqrt2},
 \qquad 1\le a<b\le d.
\]
Their rank-one projectors span the Hermitian matrices: the projectors onto
$e_a$ give the diagonal matrix units, while the real and imaginary
superpositions give the symmetric and antisymmetric off-diagonal Hermitian
matrix units after subtracting the corresponding diagonal terms.

\begin{lemma}[Stability of spanning under small cell averages]
\label{lem:small_cell_spanning}
Let $\xi_1,\ldots,\xi_{d^2}$ be rays whose projectors span the Hermitian
matrices.  There are pairwise disjoint Borel neighborhoods
$B_1,\ldots,B_{d^2}$, each of positive $\nu$-measure and with boundary of
$\nu$-measure zero, such that the
operators
\[
 A_\ell:=\int_{B_\ell}\rho_v\,\nu(dv),
 \qquad 1\le \ell\le d^2,
\]
span the Hermitian matrices.
\end{lemma}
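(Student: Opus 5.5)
Take $B_\ell$ to be small Fubini--Study balls $B_\ell=\{v:\ d_{\rm FS}(v,\xi_\ell)<r\}$ around the given rays, with a common radius $r>0$, and use a perturbation argument. The spanning condition for $d^2$ Hermitian matrices in a real space of dimension $d^2$ means linear independence. This is the nonvanishing of the determinant of the $d^2\times d^2$ real coordinate matrix with respect to a fixed real basis of Hermitian matrices (for example, the matrix units used above). That determinant is a continuous function of the $d^2$ matrices, so the set of spanning tuples is open. It therefore suffices to show that suitably normalized cell averages converge to $\rho_{\xi_\ell}$.

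First, the basic properties of the balls. The given rays are pairwise distinct, since their projectors are linearly independent. Hence for $r$ below half the minimal pairwise distance the balls are pairwise disjoint. Each open ball is a nonempty open set, and $\nu$ has full support (it is the normalized Riemannian volume of the Fubini--Study metric, equivalently the Haar measure), so $\nu(B_\ell)>0$. For the boundary, $\partial B_\ell\subset\{v: d_{\rm FS}(v,\xi_\ell)=r\}$. In terms of $t=|\langle\xi_\ell|v\rangle|^2$ this is the level set $\{t=\cos^2 r\}$, with any fixed convention for $d_{\rm FS}$ as a monotone function of $t$. Under $\nu$, the overlap $t$ has the absolutely continuous law with density proportional to $(1-t)^{d-2}$ on $[0,1]$. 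For $d=2$ it is uniform, which is still absolutely continuous. So every level set has measure zero. Alternatively, one can discard at most countably many radii whose spheres carry positive mass, since the spheres are disjoint; this also suffices.

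Second, normalization and continuity. Since $\nu(B_\ell)>0$, spanning of $\{A_\ell\}$ is equivalent to spanning of $\widetilde A_\ell:=A_\ell/\nu(B_\ell)$, because positive rescaling preserves linear independence. The map $v\mapsto\rho_v$ is continuous, and indeed Lipschitz in trace norm with respect to $d_{\rm FS}$. This gives
\[
 \bigl\|\widetilde A_\ell-\rho_{\xi_\ell}\bigr\|_1
 \le\sup_{v\in B_\ell}\|\rho_v-\rho_{\xi_\ell}\|_1\longrightarrow0
 \qquad(r\to0).
\]
By openness of the spanning condition, there is $r_0>0$ such that for all $r<r_0$ the $\widetilde A_\ell$, and hence the $A_\ell$, span the Hermitian matrices. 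Fixing any such $r$ that is also below the disjointness threshold completes the argument.

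The only step needing care is the zero-measure boundary. Here I would lean on the explicit overlap distribution, or fall back on the countable-exception trick. Everything else is routine continuity.
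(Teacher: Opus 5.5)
Your proposal is correct and follows essentially the same argument as the paper: small disjoint geodesic balls around the distinct rays $\xi_\ell$, convergence of the normalized averages $A_\ell/\nu(B_\ell)\to\rho_{\xi_\ell}$ by continuity, openness of the set of bases, and invariance of the span under positive rescaling. The paper only asserts that radii with null boundaries can be chosen. Your explicit overlap-distribution argument, or the countable-exception fallback, fills in that detail rigorously.
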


\begin{proof}
Choose pairwise disjoint sufficiently small geodesic balls centered at the
distinct rays $\xi_\ell$, with radii chosen so that their boundaries have zero
invariant measure.  Continuity of $v\mapsto\rho_v$ gives
\[
 \frac{A_\ell}{\nu(B_\ell)}\longrightarrow\rho_{\xi_\ell}
\]
as the radius of $B_\ell$ tends to zero.  In a finite-dimensional vector
space, the set of ordered $d^2$-tuples that form a basis is open.  After
shrinking the balls, the normalized operators
$A_\ell/\nu(B_\ell)$ therefore still span the Hermitian matrices.  Multiplying
each member by the positive scalar $\nu(B_\ell)$ does not change their linear
span.
\end{proof}

Shrink the balls further, if necessary, so that
$\nu(\bigcup_{\ell=1}^{d^2}B_\ell)<1$, and set
\[
 C_{1,0}:=\Theta\setminus\bigcup_{\ell=1}^{d^2}B_\ell.
\]
Then $C_{1,0}$ has positive measure, and its boundary is contained in the finite
union of the boundaries $\partial B_\ell$, each of which has $\nu$-measure zero.  Thus
\[
 \mathcal C_1:=\{C_{1,0},B_1,\ldots,B_{d^2}\}
\]
is a genuine finite Borel partition of $\Theta$ whose cells all have positive
$\nu$-measure and boundaries of $\nu$-measure zero.  Its cell operators contain the spanning subfamily
constructed in Lemma~\ref{lem:small_cell_spanning}; hence all cell operators of
$\mathcal C_1$ together span the Hermitian matrices.

\subsection{Nested refinement and identifiability}

We next make the refinement step explicit.  The following elementary
construction is useful because it simultaneously gives nesting, boundaries of measure zero, and vanishing diameter.

\begin{lemma}[Nested continuity refinements of vanishing diameter]
\label{lem:nested_continuity_refinements}
Let $\mathcal C$ be a finite family of Borel cells in the compact metric
probability space $(\Theta,d_{\rm FS},\nu)$.  Assume that the cells partition
$\Theta$ up to a set of $\nu$-measure zero, that every cell has positive
$\nu$-measure, and that every cell boundary has $\nu$-measure zero.  For every
$\delta>0$, there is a finite refinement $\mathcal C'$ with the same
properties such that every cell of $\mathcal C'$ has diameter at most
$4\delta$.
\end{lemma}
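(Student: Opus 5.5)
The plan is to refine each cell of $\mathcal C$ by intersecting it with the cells of one fixed auxiliary partition of $\Theta$. That auxiliary partition will have small diameter and $\nu$-null boundaries. Since intersecting with a partition automatically produces a refinement, nesting is immediate, and the work is in building the auxiliary partition.

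\emph{Step 1 (null-boundary balls).} By compactness, choose a finite $\delta$-net $x_1,\ldots,x_N$ of $\Theta$. For each $i$, consider the spheres $S_i(r)=\{y: d_{\rm FS}(y,x_i)=r\}$ with $r\in(\delta,2\delta)$. These spheres are pairwise disjoint and $\nu$ is a finite measure, so at most countably many of them can carry positive mass. We can therefore pick radii $r_i\in(\delta,2\delta)$ with $\nu(S_i(r_i))=0$. The boundary of the open ball $B(x_i,r_i)$ is contained in $S_i(r_i)$, so each ball has a $\nu$-null boundary. Every ball has diameter at most $4\delta$, and the balls cover $\Theta$.

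\emph{Step 2 (disjointification).} Set $D_1=B(x_1,r_1)$ and $D_i=B(x_i,r_i)\setminus\bigcup_{l<i}B(x_l,r_l)$. These sets are Borel, pairwise disjoint, and cover $\Theta$. Each satisfies $\operatorname{diam}D_i\le4\delta$. Each boundary satisfies $\partial D_i\subset\bigcup_l\partial B(x_l,r_l)$ and is therefore $\nu$-null. This uses the elementary facts $\partial(A\setminus B)\subset\partial A\cup\partial B$ and $\partial(A\cup B)\subset\partial A\cup\partial B$.

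\emph{Step 3 (intersection and pruning).} Let $\mathcal C'$ consist of all sets $C\cap D_i$ with $C\in\mathcal C$ and $\nu(C\cap D_i)>0$. Discarding the null intersections only enlarges the exceptional set by a finite union of $\nu$-null sets, so $\mathcal C'$ still partitions $\Theta$ up to a $\nu$-null set. By construction every cell of $\mathcal C'$ has positive measure. Each cell also has boundary contained in $\partial C\cup\partial D_i$, which is null, and has diameter at most $4\delta$. Finally, each $C\in\mathcal C$ is, up to a null set, the union of its children in $\mathcal C'$, since $C$ itself has positive measure. This is the nesting property used in Section~\ref{sec:rmesh_main}, where parent cell operators are sums of child operators; it holds because null pieces contribute nothing to $A_{k,j}$.

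The main point requiring care is Step 1, the choice of radii. This is resolved by the countability argument above rather than by any geometric fact about $\mathbb{CP}^{d-1}$. A secondary point to check is that pruning null intersections is compatible with the measure-zero convention of Section~\ref{sec:rmesh_main}, so that cell probabilities and operators are unaffected.
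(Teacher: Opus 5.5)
Your proposal is correct and follows the paper's proof essentially step for step: a finite $\delta$-net, radii in $(\delta,2\delta)$ chosen by the countability argument so that the bounding spheres are $\nu$-null, disjointification of the balls into the $D_i$, intersection with the cells of $\mathcal C$, and discarding of the null intersections. One cosmetic correction to Step~3: each $C$ is the union of its retained children up to a null set because the $D_i$ cover $\Theta$ exactly, whereas $\nu(C)>0$ only guarantees that $C$ has at least one child.
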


\begin{proof}
Choose a finite $\delta$-net $x_1,\ldots,x_N$ in $\Theta$.  For each $r$, choose
a radius $t_r\in(\delta,2\delta)$ for which
$\nu(\partial B(x_r,t_r))=0$.  Such radii exist because the boundaries of balls
with distinct radii are disjoint, so only countably many radii can have
positive boundary measure.  The balls $B(x_r,t_r)$ still cover $\Theta$.
Disjointize this cover by setting
\[
 D_1=B(x_1,t_1),~
 D_r=B(x_r,t_r)\setminus\bigcup_{a<r}B(x_a,t_a),
~ 2\le r\le N.
\]
Every $D_r$ has boundary contained in the finite union of the chosen sphere
boundaries, and every nonempty $D_r$ has diameter at most $2t_r<4\delta$.
Intersecting the $D_r$ with the cells of $\mathcal C$ gives a finite family of
Borel sets whose boundaries have $\nu$-measure zero, because they are
contained in the union of the old cell boundaries and the chosen sphere
boundaries.  Retain only the intersections having positive $\nu$-measure.  The finite
union $N$ of the omitted intersections satisfies $\nu(N)=0$.  Consequently,
the retained cells cover $\Theta$ except for a set of $\nu$-measure zero.
Each retained cell lies in one $D_r$, so its diameter is less than
$4\delta$, and it lies in one parent cell, so the family is a refinement in
the measure-zero convention stated above.
\end{proof}

Apply Lemma~\ref{lem:nested_continuity_refinements} recursively to
$\mathcal C_1$, with $\delta_k\downarrow0$, and denote the resulting refinement
sequence by $(\mathcal C_k)_k$.  It is nested in the above measure-zero convention, every cell has positive
$\nu$-measure and boundary of $\nu$-measure zero, and
\begin{equation}
 \max_j\operatorname{diam}_{\rm FS}(C_{k,j})\longrightarrow0.
 \label{eq:appendix_mesh_diameter}
\end{equation}
No shape-regularity assumption is used.

\begin{lemma}[Spanning persists under refinement]
\label{lem:spanning_persists_refinement}
If the cell operators of $\mathcal C_k$ span the Hermitian matrices and
$\mathcal C_{k+1}$ refines $\mathcal C_k$, then the cell operators of
$\mathcal C_{k+1}$ also span the Hermitian matrices.
\end{lemma}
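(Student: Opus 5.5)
The plan is to show that every parent-cell operator of $\mathcal C_k$ is a finite sum of child-cell operators of $\mathcal C_{k+1}$. Then the span of the child operators contains the span of the parent operators, which is all Hermitian matrices by hypothesis.

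First we fix the additive decomposition. Let $C_{k,j}$ be a cell of $\mathcal C_k$. Since $\mathcal C_{k+1}$ refines $\mathcal C_k$ in the measure-zero convention, there is a finite index set $S_j$ with the following properties: the child cells $C_{k+1,i}$, $i\in S_j$, are pairwise disjoint and contained in $C_{k,j}$, and the difference $N_j:=C_{k,j}\setminus\bigcup_{i\in S_j}C_{k+1,i}$ has $\nu(N_j)=0$. Indeed, every child cell lies in exactly one parent cell. Moreover, the union of all children covers $\Theta$ up to a null set, so the children inside $C_{k,j}$ cover it up to a null set. This holds because children lying in other parent cells meet $C_{k,j}$ only in a null set, the parent cells being pairwise disjoint.

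Next we apply countable additivity of the operator-valued integral. The map $B\mapsto\int_B\rho_v\,\nu(dv)$ is additive over disjoint Borel sets, and it vanishes on $\nu$-null sets because $\|\rho_v\|\le1$. This gives
\[
 A_{k,j}=\sum_{i\in S_j}A_{k+1,i}+\int_{N_j}\rho_v\,\nu(dv)=\sum_{i\in S_j}A_{k+1,i}.
\]
Hence $\operatorname{span}_{\mathbb R}\{A_{k,j}\}_j\subseteq\operatorname{span}_{\mathbb R}\{A_{k+1,i}\}_i$. The left side is the full real space of Hermitian matrices, and each $A_{k+1,i}$ is itself Hermitian, so the right side is also the full space.

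The only delicate point is the bookkeeping in the measure-zero convention: we must verify that each child belongs to a single parent and that the children exhaust each parent up to a null set. Both follow from the construction in Lemma~\ref{lem:nested_continuity_refinements}, where every retained cell is an intersection with one parent cell and the omitted pieces have measure zero. The remaining steps are immediate.
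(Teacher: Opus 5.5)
Your proof is correct and follows essentially the same route as the paper: you write each parent-cell operator $A_{k,j}$ as the sum of the child-cell operators contained in $C_{k,j}$, using that each parent is exhausted by its children up to a $\nu$-null set, so the child span contains the parent span, which is already all Hermitian matrices. Your extra bookkeeping (each child lies in exactly one parent, the residual set $N_j$ is null, and the cell-integral map is additive and vanishes on null sets) only makes explicit what the paper's two-line argument asserts.
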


\begin{proof}
Because the refinement is understood up to sets of $\nu$-measure zero, each parent cell $C_{k,j}$ is the
disjoint union of its child cells of positive $\nu$-measure up to a set of $\nu$-measure zero.
Therefore
\[
 A_{k,j}=\sum_{\ell:\,C_{k+1,\ell}\subset C_{k,j}}A_{k+1,\ell}.
\]
Thus every parent-cell operator belongs to the linear span of the child-cell
operators.  The latter span therefore contains the span of the former.
\end{proof}

\begin{lemma}[Spanning implies global identifiability]
\label{lem:spanning_implies_identifiability}
If $\{A_{k,j}\}_j$ spans the Hermitian matrices, then the map
$\theta\mapsto q_{\theta,k}$ is globally identifiable.
\end{lemma}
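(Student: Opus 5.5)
The plan is to show that equal cell distributions force equal density matrices, and then that equal rank-one projectors force equal rays. Suppose $\theta,\theta'\in\Theta$ satisfy $q_{\theta,k}=q_{\theta',k}$. By the definition \eqref{eq:recovering_mesh_model}, this means
\[
 \operatorname{Tr}\bigl[(\rho_\theta-\rho_{\theta'})A_{k,j}\bigr]=0
\]
for every cell index $j$.

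Set $\Delta:=\rho_\theta-\rho_{\theta'}$, which is Hermitian. The map $H\mapsto\operatorname{Tr}(\Delta H)$ is real-linear on the real vector space of Hermitian matrices. It vanishes on every $A_{k,j}$, and by hypothesis these operators span that space, so it vanishes identically.

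Now take $H=\Delta$. This gives $\operatorname{Tr}(\Delta^2)=\|\Delta\|_2^2=0$, hence $\Delta=0$. In other words, the Hilbert--Schmidt inner product is nondegenerate on Hermitian matrices, and $\Delta$ is orthogonal to the whole space.

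It follows that $\rho_\theta=\rho_{\theta'}$. A ray in $\mathbb P(\mathbb C^d)$ is identified with its rank-one projector, as fixed in Section~\ref{sec:local_haar_povm_prediction}, so equal projectors give $\theta=\theta'$. Hence $\theta\mapsto q_{\theta,k}$ is injective, which is global identifiability.

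The argument is elementary, and the only point needing care is the choice of field. The span must be taken over $\mathbb R$ within the Hermitian matrices, and $\operatorname{Tr}(\Delta H)$ is real there. So real-linearity is exactly what the hypothesis supplies, and no complex span is needed.

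If a quantitative form is wanted later for the uniform KL separation used in Theorem~\ref{thm:fixed_mesh_selected_mle}, the same reasoning gives one. Spanning makes the linear map $\Delta\mapsto(\operatorname{Tr}\Delta A_{k,j})_j$ injective on a finite-dimensional space, so it satisfies a lower bound $c\|\Delta\|_2$. This quantitative bound is not required for the present statement.
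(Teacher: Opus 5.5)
Your proof is correct and takes the same approach as the paper. Equal cell probabilities give $\operatorname{Tr}[(\rho_\theta-\rho_{\theta'})A_{k,j}]=0$ for all $j$, spanning makes $\rho_\theta-\rho_{\theta'}$ Hilbert--Schmidt orthogonal to every Hermitian matrix (you spell out the test against $\Delta$ itself, which the paper leaves implicit), and equal rank-one projectors give equal rays.
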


\begin{proof}
Suppose $q_{\theta,k}(j)=q_{\theta',k}(j)$ for every $j$.  By
\eqref{eq:recovering_mesh_model},
\[
 \operatorname{Tr}\!\left[(\rho_\theta-\rho_{\theta'})A_{k,j}\right]=0
 \qquad\text{for every }j.
\]
Since the $A_{k,j}$ span the Hermitian matrices, the Hermitian operator
$\rho_\theta-\rho_{\theta'}$ is orthogonal in the Hilbert--Schmidt inner product
to every Hermitian matrix.  Hence it is zero.  Equality of the rank-one
projectors implies $\theta=\theta'$ as rays.
\end{proof}

Lemmas~\ref{lem:small_cell_spanning}--\ref{lem:spanning_implies_identifiability}
prove items 1--3 of Proposition~\ref{prop:recovering_meshes_exist}.

\subsection{Fine and mesh directional scores}

Fix $\theta\in\Theta$ and $w\in T_\theta\Theta$.  Let $t\mapsto\theta_t$ be a
smooth curve with $\theta_0=\theta$ and initial velocity $w$.  The fine
one-copy density with respect to $\nu$ is
\[
 p_\theta(v)=d\operatorname{Tr}(\rho_\theta\rho_v).
\]
Define its directional score by
\begin{equation}
 S_{\theta,w}(v)
 :=\left.\frac{d}{dt}\log p_{\theta_t}(v)\right|_{t=0},
 \label{eq:fine_directional_score}
\end{equation}
where its value on the zero set, which has $P_\theta$-measure zero of $p_\theta$ may be chosen
arbitrarily.  Although the logarithmic score may be singular on that zero
hyperplane, the product
\begin{equation}
 p_\theta(v)S_{\theta,w}(v)
 =\partial_w p_\theta(v)
 =d\operatorname{Tr}\!\left[(\partial_w\rho_\theta)\rho_v\right]
 \label{eq:score_times_density_derivative}
\end{equation}
is continuous and bounded in $v$ on the compact projective space.  More
precisely, for the fixed smooth curve $t\mapsto\theta_t$, smoothness and
compactness imply that $\partial_t p_{\theta_t}(v)$ is uniformly bounded in
$v$ for $t$ in a neighborhood of zero.  Dominated differentiation therefore
justifies differentiating each cell probability under the integral.  Moreover,
the known Fisher identity for the fine Haar observation is finite and gives
\begin{equation}
 \mathbb E_\theta[S_{\theta,w}(V)^2]
 =I_\theta^{\rm fine}[w,w]
 =2g_{{\rm FS},\theta}[w,w],
 \label{eq:fine_score_fisher_identity}
\end{equation}
so $S_{\theta,w}\in L^2(P_\theta)$.

Let $J_k$ be the index of the cell containing $V$.  The finite-alphabet
probabilities and directional scores are
\[
 q_{\theta,k}(j)=\int_{C_{k,j}}p_\theta(v)\,\nu(dv),
 ~
 S_{\theta,w}^{(k)}(j)
 :=\left.\frac{d}{dt}\log q_{\theta_t,k}(j)\right|_{t=0}.
\]
Every $q_{\theta,k}(j)$ is strictly positive.  Indeed, the zero set of
$v\mapsto p_\theta(v)$ is a projective hyperplane of invariant measure zero,
whereas every cell has positive measure.

\begin{lemma}[The mesh score is a conditional expectation]
\label{lem:mesh_score_conditional_expectation}
For every $k$, $\theta$, and $w$,
\begin{equation}
 S_{\theta,w}^{(k)}(J_k)
 =\mathbb E_\theta[S_{\theta,w}(V)\mid J_k]
 \quad P_\theta\text{-almost surely}.
 \label{eq:mesh_score_as_conditional_expectation}
\end{equation}
\end{lemma}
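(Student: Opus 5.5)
Since $J_k$ takes finitely many values, each with strictly positive $P_\theta$-probability $q_{\theta,k}(j)>0$, the conditional expectation on the right of \eqref{eq:mesh_score_as_conditional_expectation} has the elementary form
\[
 \mathbb E_\theta[S_{\theta,w}(V)\mid J_k=j]
 =\frac{1}{q_{\theta,k}(j)}\int_{C_{k,j}}S_{\theta,w}(v)\,p_\theta(v)\,\nu(dv),
\]
and it is well defined because $S_{\theta,w}\in L^2(P_\theta)\subset L^1(P_\theta)$ by \eqref{eq:fine_score_fisher_identity}. The plan is to show that this cell average equals $S^{(k)}_{\theta,w}(j)$ for every $j$. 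The labels $j$ exhaust the values of $J_k$ outside the $\nu$-null set $N$, and $N$ is also $P_\theta$-null. Hence the two random variables agree $P_\theta$-almost surely.

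The computation has two steps. First, use \eqref{eq:score_times_density_derivative} to replace $S_{\theta,w}p_\theta$ by $\partial_w p_\theta(v)=d\operatorname{Tr}[(\partial_w\rho_\theta)\rho_v]$. These functions agree off the zero hyperplane of $p_\theta$, and that hyperplane has $\nu$-measure zero, so the integrals coincide. Second, justify exchanging differentiation and integration:
\[
 \int_{C_{k,j}}\partial_w p_\theta\,d\nu=\left.\frac{d}{dt}\int_{C_{k,j}}p_{\theta_t}\,d\nu\right|_{t=0}=\left.\frac{d}{dt}q_{\theta_t,k}(j)\right|_{t=0}.
\]
This exchange holds by the uniform bound on $\partial_t p_{\theta_t}(v)$ for small $|t|$ noted above, together with dominated differentiation on a probability space. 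More directly, $q_{\theta_t,k}(j)=d\operatorname{Tr}(\rho_{\theta_t}A_{k,j})$ is linear in the smooth curve $\rho_{\theta_t}$, so the derivative is simply $d\operatorname{Tr}(\dot\rho_{\theta}A_{k,j})$, and this equals the integral by linearity of the trace and the definition of $A_{k,j}$. Dividing by $q_{\theta,k}(j)>0$ and applying the chain rule to $\log q_{\theta_t,k}(j)$, which is valid because $q_{\theta_t,k}(j)$ stays positive near $t=0$ by continuity, gives exactly $S^{(k)}_{\theta,w}(j)$.

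The main obstacle is the singularity of the fine log-score on the hyperplane $\{v:p_\theta(v)=0\}$. One cannot naively differentiate $\log p_{\theta_t}$ inside a cell integral, since a positive-measure cell may meet that hyperplane. The step that handles this is to work only with the product $p_\theta S_{\theta,w}=\partial_w p_\theta$, which is bounded and continuous, and to invoke the $L^1$-integrability of $S_{\theta,w}$ only to give meaning to the conditional expectation. I would state these measure-zero and integrability points explicitly and leave the rest as the routine linear computation above.
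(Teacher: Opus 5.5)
Your proposal is correct and follows the paper's proof: differentiate the cell probability under the integral (justified by the bounded derivative $\partial_w p_\theta$), identify the integrand with $p_\theta S_{\theta,w}$ off a null set, and divide by $q_{\theta,k}(j)>0$ to obtain the elementary conditional mean on the cell. Your alternative justification of the interchange, using the linearity of $q_{\theta_t,k}(j)=d\operatorname{Tr}(\rho_{\theta_t}A_{k,j})$, and your explicit treatment of the null sets $N$ and the zero hyperplane are minor refinements of the same argument rather than a different route.
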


\begin{proof}
For a cell $C_{k,j}$, the bounded derivative in
\eqref{eq:score_times_density_derivative} justifies differentiation under the
cell integral and gives
\begin{align*}
 \left.\frac{d}{dt}q_{\theta_t,k}(j)\right|_{t=0}
& =\int_{C_{k,j}}\partial_w p_\theta(v)\,\nu(dv) \\
& =\int_{C_{k,j}}p_\theta(v)S_{\theta,w}(v)\,\nu(dv).
\end{align*}
Dividing by the positive number $q_{\theta,k}(j)$ yields
\[
 S_{\theta,w}^{(k)}(j)
 =\frac{\int_{C_{k,j}}p_\theta(v)S_{\theta,w}(v)\,\nu(dv)}
        {\int_{C_{k,j}}p_\theta(v)\,\nu(dv)},
\]
which is precisely the conditional mean of the fine score on that cell.
\end{proof}

\subsection{Pointwise Fisher recovery}
Let
\(
 \mathcal F_k:=\sigma(J_k).
\)
Since the partitions are nested, the sigma-fields
$(\mathcal F_k)_k$ form an increasing sequence.  We first identify the
sigma-field generated in the limit.  Let $x\ne y$.  Once the maximal cell
diameter is smaller than $d_{\rm FS}(x,y)$, no cell can contain both points.
Thus the partitions separate points.

More quantitatively, let $O\subset\Theta$ be open.  Outside the countable union of the residual sets of $\nu$-measure zero from the successive refinements, every
$x\in O$ belongs to a cell of every mesh.  Choose $r>0$ such that
$B(x,r)\subset O$.  For every sufficiently large $k$, the cell containing $x$
has diameter smaller than $r$ and is therefore contained in $O$.  Consequently,
\begin{align}
 O
 &=
 \bigcup\left\{
 C:\ C\in\bigcup_k\mathcal C_k,\ C\subset O
 \right\}\nonumber\\
 &\qquad\qquad\qquad\qquad\quad\text{up to a set of }\nu\text{-measure zero}.
\label{eq:open_sets_as_mesh_cells}
\end{align}
The family on the right is countable.  Since $P_\theta\ll\nu$, the same
identity holds up to a set of $P_\theta$-measure zero.  Hence
\begin{equation}
 \sigma\!\left(\bigcup_k\mathcal F_k\right)
 =
 \mathcal B(\Theta)
 \quad\text{after completion under }P_\theta.
 \label{eq:mesh_sigma_fields_generate_borel}
\end{equation}

By Lemma~\ref{lem:mesh_score_conditional_expectation}, the directional
score of the $k$-th mesh is the conditional expectation of the fine
directional score:
\begin{equation}
 S_{\theta,w}^{(k)}(J_k)
 =
 \mathbb E_\theta
 \left[
 S_{\theta,w}(V)\mid\mathcal F_k
 \right].
 \label{eq:mesh_score_conditional_expectation_sigma_field}
\end{equation}
Since $S_{\theta,w}\in L^2(P_\theta)$ and the limiting sigma-field in
\eqref{eq:mesh_sigma_fields_generate_borel} contains the fine observation
$V$, the $L^2$ martingale convergence theorem gives
\begin{equation}
 S_{\theta,w}^{(k)}(J_k)
 \longrightarrow
 S_{\theta,w}(V)
 \quad\text{in }L^2(P_\theta).
 \label{eq:mesh_score_L2_convergence}
\end{equation}

We now translate this score convergence into Fisher-information recovery.
Conditional expectation is the orthogonal projection in $L^2(P_\theta)$.
Therefore,
\[
 S_{\theta,w}(V)
 -
 S_{\theta,w}^{(k)}(J_k)
\]
is orthogonal to every $\mathcal F_k$-measurable square-integrable random
variable, and in particular to $S_{\theta,w}^{(k)}(J_k)$.  The Pythagorean
identity consequently gives
\begin{align}
 \mathbb E_\theta\!\left[S_{\theta,w}(V)^2\right]
 &=
 \mathbb E_\theta\!\left[
   \left(S_{\theta,w}^{(k)}(J_k)\right)^2
 \right]
 \notag\\
 &\quad+
 \mathbb E_\theta\!\left[
   \left(
     S_{\theta,w}(V)-S_{\theta,w}^{(k)}(J_k)
   \right)^2
 \right].
 \label{eq:fisher_score_pythagorean_identity}
\end{align}
By the definitions of the fine and mesh Fisher information, this is
equivalently
\begin{align}
 &I_\theta^{\rm fine}[w,w]-I_{\theta,k}[w,w]\nonumber\\
 &\qquad\qquad\qquad =
 \mathbb E_\theta\!\left[
   \left(
     S_{\theta,w}(V)-S_{\theta,w}^{(k)}(J_k)
   \right)^2
 \right].
 \label{eq:fisher_loss_as_score_projection_error}
\end{align}
The right-hand side tends to zero by
\eqref{eq:mesh_score_L2_convergence}.  Using the fine Fisher identity
\eqref{eq:fine_score_fisher_identity}, we therefore obtain
\begin{align}
 I_{\theta,k}[w,w]
 &=
 \mathbb E_\theta\!\left[
   \left(S_{\theta,w}^{(k)}(J_k)\right)^2
 \right]
 \longrightarrow
 I_\theta^{\rm fine}[w,w]\nonumber\\
 &=
 2g_{{\rm FS},\theta}[w,w].
 \label{eq:pointwise_fisher_recovery_appendix}
\end{align}

The conditional-expectation interpretation also makes monotonicity explicit.
Because conditional expectation is orthogonal projection in $L^2(P_\theta)$
and $\mathcal F_k\subset\mathcal F_{k+1}$,
\begin{equation}
 I_{\theta,k}[w,w]
 \le I_{\theta,k+1}[w,w]
 \le I_\theta^{\rm fine}[w,w].
 \label{eq:fisher_monotonicity_under_refinement}
\end{equation}
Thus refinement can only restore Fisher information lost by coarse-graining.

\subsection{Uniform recovery on the unit tangent bundle}

Let
\[
 UT\Theta
 :=\{(\theta,w):\theta\in\Theta,\ w\in T_\theta\Theta,
 g_{{\rm FS},\theta}[w,w]=1\}
\]
be the Fubini--Study unit tangent bundle.  Since $\Theta$ is compact and finite
dimensional, $UT\Theta$ is compact.  For fixed $k$, define
\[
 f_k(\theta,w):=I_{\theta,k}[w,w].
\]
For a fixed finite mesh, the Fisher quadratic form can be written without
logarithms as
\begin{equation}
 f_k(\theta,w)
 =\sum_{j=1}^{m_k}
 \frac{\bigl(\partial_w q_{\theta,k}(j)\bigr)^2}
      {q_{\theta,k}(j)}.
 \label{eq:finite_mesh_fisher_explicit}
\end{equation}
The alphabet is finite.  Each numerator and denominator depends continuously
on $(\theta,w)$, and every denominator is positive.  Hence $f_k$ is continuous
on $UT\Theta$.

By \eqref{eq:fisher_monotonicity_under_refinement}, the sequence $(f_k)_k$ is
pointwise nondecreasing.  By
\eqref{eq:pointwise_fisher_recovery_appendix}, it converges pointwise on
$UT\Theta$ to the continuous constant function $2$.  Dini's theorem therefore
gives
\begin{equation}
 \sup_{(\theta,w)\in UT\Theta}|f_k(\theta,w)-2|
 \longrightarrow0.
 \label{eq:dini_uniform_fisher_recovery}
\end{equation}
This is exactly the uniform Fisher-recovery condition
\eqref{eq:uniform_fisher_recovery}.  Together with the construction and
identifiability arguments above, it proves
Proposition~\ref{prop:recovering_meshes_exist}.

\section{Proof of the Trace-SWAP Identity}
\label{proof_fact:trace_swap_identity}

\begin{fact}[Trace-SWAP identity]
\label{fact:trace_swap_identity}
Let \(A\) be an operator on \(\mathcal H^{\otimes n}\), and let
\(S_{k,n+1}\) be the swap between the \(k\)-th tensor factor and an additional
\((n+1)\)-th factor.  Then
\begin{equation}
    \operatorname{Tr}_{1,\ldots,n+1}
    \left[
        (A\otimes I)S_{k,n+1}
    \right]
    =
    \operatorname{Tr}_{1,\ldots,n}[A].
    \label{eq:trace_swap_identity}
\end{equation}
\end{fact}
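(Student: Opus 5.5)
The plan is a direct computation in a product orthonormal basis, essentially reusing Fact~\ref{fact:part_swap}. Fix an orthonormal basis $\{\ket{j}\}$ of $\mathcal H$ and write the full trace over $n+1$ factors as a sum of diagonal matrix elements $\bra{j_1\dots j_n,j_{n+1}}(A\otimes I)S_{k,n+1}\ket{j_1\dots j_n,j_{n+1}}$. First I will note that $S_{k,n+1}$ acts on product basis vectors by exchanging the $k$-th and $(n+1)$-st labels. Therefore
\[
 S_{k,n+1}\ket{j_1\dots j_k\dots j_n,j_{n+1}}=\ket{j_1\dots j_{n+1}\dots j_n,j_k},
\]
where $j_{n+1}$ now occupies slot $k$.

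Next, since $A\otimes I$ acts trivially on the last factor, the matrix element factorizes as
\[
 \bra{j_1\dots j_k\dots j_n}A\ket{j_1\dots j_{n+1}\dots j_n}\,\langle j_{n+1}\mid j_k\rangle .
\]
The Kronecker delta $\delta_{j_{n+1},j_k}$ then collapses the sum over $j_{n+1}$, leaving $\sum_{j_1,\dots,j_n}\bra{j_1\dots j_n}A\ket{j_1\dots j_n}=\Tr_{1,\dots,n}[A]$.

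An equivalent route is to take the full trace of the expression in Fact~\ref{fact:part_swap}: set $x=y$ and sum over $x$. This reinserts the summation index into slot $k$ and reproduces $\Tr A$ immediately. I would present the basis computation and mention this as a consistency check.

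There is no real obstacle here. The only care needed is the index bookkeeping, namely which ket label lands in which slot after the swap. Because the right factor $S_{k,n+1}$ acts on the ket, only the ket labels are exchanged, and the bra labels stay where they are.
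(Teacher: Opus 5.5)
Your proof is correct and is essentially the paper's argument: both are orthonormal-basis computations in which the swap identifies the $(n+1)$-st index with the $k$-th. The paper first reduces by linearity to product operators $A_1\otimes\cdots\otimes A_n$ and factorizes the trace, whereas you carry out the index computation for general $A$ directly; your alternative check via Fact~\ref{fact:part_swap} also works, provided you note that cyclicity of the full trace lets you move $S_{k,n+1}$ to the left, since that fact is stated for $S_{k,n+1}(A\otimes I_{n+1})$.
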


\begin{proof}
It is enough by linearity to verify the identity for simple tensor operators
\(A=A_1\otimes\cdots\otimes A_n\).  Since the swap exchanges the \(k\)-th and
\((n+1)\)-th tensor factors, the full trace factorizes as
\begin{align}
&\quad\operatorname{Tr}\left[(A\otimes I)S_{k,n+1}\right]\nonumber\\
    &\qquad\qquad\quad=
    \left(\prod_{i\ne k}\operatorname{Tr}A_i\right)
    \operatorname{Tr}_{k,n+1}\left[(A_k\otimes I)S_{k,n+1}\right].\nonumber
\end{align}
For the last two factors,
\(\operatorname{Tr}_{k,n+1}[(A_k\otimes I)S_{k,n+1}]=\operatorname{Tr}A_k\),
which follows immediately by expanding in an orthonormal basis.  Hence the
right-hand side is \(\prod_i\operatorname{Tr}A_i=\operatorname{Tr}A\).  Linearity
gives the claim for every operator \(A\) on \(\mathcal H^{\otimes n}\).
\end{proof}

\section{Proof of Fact \ref{fact:part_swap}}\label{proof_fact:part_swap}
We insert a complete orthonormal basis of tensor product states $|j_1 \dots j_n\rangle$ to execute the partial trace over the $n$ observed systems,
\begin{align*}
&\langle x | \Tr_{1:n} \left[ S_{k,n+1} (A \otimes I_{n+1})\right] | y \rangle \\
=& \sum_{j_1 \dots j_n} \langle j_1 \dots j_n, x | S_{k,n+1}  (A \otimes I_{n+1}) | j_1 \dots j_n, y \rangle.
\end{align*}

Acting with \(S_{k,n+1}\) on the bra exchanges the entries \(j_k\) and \(x\),
giving \(\langle j_1\dots x\dots j_n,j_k|\).

The last tensor factor then contributes \(\langle j_k|y\rangle\), while \(A\)
acts on the first \(n\) factors:
\begin{equation*}
= \sum_{j_1 \dots j_n} \langle j_1 \dots x \dots j_n | A | j_1 \dots j_k \dots j_n \rangle \langle j_k | y \rangle.
\end{equation*}

The sum over the specific basis index $j_k$ then performs the resolution of the identity on the $k$-th slot.  Equivalently, it replaces the ket entry in that slot by $y$, leaving
\[
    \sum_{j_{\setminus k}}
    \langle j_1 \dots x \dots j_n | A |
            j_1 \dots y \dots j_n \rangle,
\]
which is the claimed formula.  This completes the proof.\hfill\QED

\section{Proof of Lemma \ref{lem:m_k_perm}}\label{proof_lem:m_k_perm}
Apply Fact~\ref{fact:part_swap} with
\(A=\Pi_{\mathrm{sym}}^{(n)}O_n\) to the matrix element
\(\langle x|K_{k,\vec u}|y\rangle\):
\begin{equation*}
\langle x | K_{k,\vec{u}} | y \rangle = \sum_{j_{\setminus k}} \langle j_1 \dots x \dots j_n | \Pi_{\mathrm{sym}}^{(n)} O_n | j_1 \dots y \dots j_n \rangle.
\end{equation*}

Use
\(\Pi_{\mathrm{sym}}^{(n)}=n!^{-1}\sum_{\pi\in S_n}P_\pi\), with the convention
\begin{equation*}
    P_\pi|v_1\otimes\cdots\otimes v_n\rangle
    =
    |v_{\pi(1)}\otimes\cdots\otimes v_{\pi(n)}\rangle .
\end{equation*}
The omitted row is \(k\), and the target column is \(j=\pi(k)\).  Expanding
the two inner products gives
\begin{align*}
&\langle j_1 \dots x \dots j_n | u_{\pi(1)} \dots u_{\pi(n)} \rangle \nonumber\\
&\qquad\qquad\qquad\qquad= \left( \prod_{m \neq k} \langle j_m | u_{\pi(m)} \rangle \right) \langle x | u_{\pi(k)} \rangle,
\end{align*}
and,
\begin{equation*}
\langle u_1 \dots u_n | j_1 \dots y \dots j_n \rangle = \left( \prod_{m \neq k} \langle u_m | j_m \rangle \right) \langle u_k | y \rangle.
\end{equation*}

Substitution yields
\begin{align*}
&\langle x | K_{k,\vec{u}} | y \rangle\\
 =& \frac{1}{n!} \sum_{\pi \in S_n} \sum_{j_{\setminus k}} \left[ \left( \prod_{m \neq k} \langle u_m | j_m \rangle \langle j_m | u_{\pi(m)} \rangle \right) \langle x | u_{\pi(k)} \rangle \langle u_k | y \rangle \right].
\end{align*}

For each \(m\ne k\), the resolution of the identity gives
\begin{equation*}
\sum_{j_m} \langle u_m | j_m \rangle \langle j_m | u_{\pi(m)} \rangle = \langle u_m | I | u_{\pi(m)} \rangle = \langle u_m | u_{\pi(m)} \rangle.
\end{equation*}

Hence
\begin{align*}
&\hspace{-10pt}\langle x | K_{k,\vec{u}} | y \rangle\nonumber\\
&= \frac{1}{n!} \sum_{\pi \in S_n} \left( \prod_{m \neq k} \langle u_m | u_{\pi(m)} \rangle \right) \langle x | u_{\pi(k)} \rangle \langle u_k | y \rangle\\
&= \frac{1}{n!} \sum_{j=1}^n \sum_{\substack{\pi\in S_n:\\ \pi(k) = j}}\left( \prod_{m \neq k} \langle u_m | u_{\pi(m)} \rangle \right) \langle x | u_{\pi(k)} \rangle \langle u_k | y \rangle
\end{align*}

For fixed \(j=\pi(k)\), the restriction of \(\pi\) is a bijection from
\(\{1,\ldots,n\}\setminus\{k\}\) to
\(\{1,\ldots,n\}\setminus\{j\}\).  Therefore the sum of
\(\prod_{m\ne k}G_{m,\pi(m)}\) over these restrictions is
\(\Perm(G(k|j))\), and
\begin{equation*}
\langle x | K_{k,\vec{u}} | y \rangle = \frac{1}{n!} \sum_{j=1}^n \Perm(G(k|j)) \langle x | u_j \rangle \langle u_k | y \rangle.
\end{equation*}

Since this holds for all \(x,y\),
\begin{equation*}
K_{k,\vec{u}} = \frac{1}{n!} \sum_{j=1}^n \Perm(G(k|j)) |u_j\rangle\langle u_k|.
\end{equation*}
This completes the proof of Lemma \ref{lem:m_k_perm}.\hfill\QED

\section*{Acknowledgements} 
N.A.W. acknowledges the funding support from the National Quantum Mission, an initiative of the Department of Science and Technology, Govt. of India and the support provided by the Foundation for QC Innovation (FQCI), DST-NQM T-Hub at IISc Bengaluru, in facilitating this project. N.A.W also acknowledges the support provided by the Grant  ANRF/ARG/2025/012066/MS from the Department of Science \& Technology, Govt of India. 
A.D. was supported by the TCS Research Scholar Program (Cycle 19) from Tata Consultancy Services. 
M.H. was supported in part by the General R\&D Projects of 1+1+1
CUHK--CUHK(SZ)--GDST Joint Collaboration Fund (Grant No.~GRDP2025-022) and the
Guangdong Provincial Quantum Science Strategic Initiative (Grant
No.~GDZX2505003).
During the preparation of this manuscript, the authors used Microsoft Copilot with the GPT-5.5 and 5.6 Thinking model to assist with language editing, organization and presentation of the manuscript, and the exploration, development, and checking of certain mathematical derivations and arguments. 
All AI-assisted material was critically reviewed, verified, and revised by the authors, who take full responsibility for the accuracy and integrity of the manuscript.

\end{document}